\definecolor{lcl}{RGB}{140,0,100}
\renewcommand{\@makefntext}[1]{%
  \parindent 1em%
  \raggedright
  \begin{hangparas}{0.8em}{1}
  \noindent {$^{\@thefnmark}$~#1}
  \end{hangparas}
}
\newcommand{\langl}{\begin{picture}(4.5,7)
\put(1.1,2.5){\rotatebox{60}{\line(1,0){5.5}}}
\put(1.1,2.5){\rotatebox{300}{\line(1,0){5.5}}}
\end{picture}}
\newcommand{\rangl}{\begin{picture}(4.5,7)
\put(.9,2.5){\rotatebox{120}{\line(1,0){5.5}}}
\put(.9,2.5){\rotatebox{240}{\line(1,0){5.5}}}
\end{picture}}
\definecolor{cmtclr}{rgb}{0.0,0.6,0.0}
\definecolor{numclr}{rgb}{0.0,0.4,0.0}
\definecolor{kvdclr}{rgb}{0.0,0.0,0.6}
\definecolor{strclr}{rgb}{0.5,0.1,0.0}
\definecolor{prepclr}{rgb}{0.0,0.0,0.0}
\newcommand{\kvd}[1]{\textnormal{\textcolor{kvdclr}{\sffamily #1}}}
\newcommand{\num}[1]{\textnormal{\textcolor{numclr}{\sffamily #1}}}
\newcommand{\str}[1]{\textnormal{\textcolor{strclr}{\sffamily "#1"}}}
\newcommand{\strf}[1]{\textnormal{\textcolor{strclr}{\sffamily #1}}}
\newcommand{\ident}[1]{\textnormal{\sffamily #1}}
\newcommand{\lsep}[0]{\;\; | \;\;}
\newcommand{\narrow}[1]{\hspace{-0.7em} #1 \hspace{-0.7em}}
\newcommand{\tytag}{\ident{tag}}
\newcommand{\dropopt}[1]{\lfloor#1\rfloor}
\newcommand{\addopt}[1]{\lceil#1\rceil}
\newcommand{\tytagof}{\ident{tagof}}
\newcommand{\reduce}{\rightsquigarrow}
\newcommand{\sem}[1]{\llbracket #1 \rrbracket}
\newcommand{\semalt}[1]{S(#1)}
\newtheorem{definition}{Definition}
\newtheorem{theorem}{Theorem}
\newtheorem{remark}{Remark}
\newtheorem{lemma}[theorem]{Lemma}
\begin{document}

\setlength{\pdfpageheight}{\paperheight}
\setlength{\pdfpagewidth}{\paperwidth}
\conferenceinfo{PLDI '16}{June 13--17, 2016, Santa Barbara, CA, United States}
\copyrightyear{2016}
\copyrightdata{978-1-nnnn-nnnn-n/yy/mm}


\title{Types from data: \textnormal{Making structured data first-class citizens in F\#}}

\authorinfo{Tomas Petricek}
           {University of Cambridge}
           {tomas@tomasp.net}
\authorinfo{Gustavo Guerra}
           {Microsoft Corporation, London}
           {gustavo@codebeside.org}
\authorinfo{Don Syme}
           {Microsoft Research, Cambridge}
           {dsyme@microsoft.com}
\maketitle


\begin{abstract}
Most modern applications interact with external services and access data in structured formats such
as XML, JSON and CSV. Static type systems do not understand such formats, often making data access
more cumbersome. Should we give up and leave the messy world of external data to dynamic typing
and runtime checks? Of course, not!

We present F\# Data, a library that integrates external structured data into F\#. As most real-world
data does not come with an explicit schema, we develop a shape inference algorithm that infers a
shape from representative sample documents. We then integrate the inferred shape into the F\# type
system using type providers. We formalize the process and prove a relative type soundness theorem.

Our library significantly reduces the amount of data access code and it provides additional
safety guarantees when contrasted with the widely used weakly typed techniques.
\end{abstract}

\category{D.3.3}{Programming Languages}{Language Constructs and Features }
\keywords F\#, Type Providers, Inference, JSON, XML

%
%

\section{Introduction}
\label{sec:introduction}

Applications for social networks, finding tomorrow's weather or searching train schedules
all communicate with external services. Increasingly, these services provide end-points that return
data as CSV, XML or JSON. Most such services do not come with an explicit schema. At best, the
documentation provides sample responses for typical requests.

For example, \url{http://openweathermap.org/current} contains one example to document an end-point
to get the current weather. Using standard libraries, we might call it as\footnote{We abbreviate
the full URL and omit application key (available after registration). The returned JSON is shown
in Appendix~\ref{sec:appendix-weather} and can be used to run the code against a local file.}:

\begin{equation*}
\begin{array}{l}
 \kvd{let}~\ident{doc}=\ident{Http.Request}(\str{http://api.owm.org/?q=NYC}) \\
 \kvd{match}~\ident{JsonValue.Parse}(\ident{doc})~\kvd{with} \\
 |~\ident{Record}(\ident{root})\rightarrow \\
 \quad \kvd{match}~\ident{Map.find}~\str{main}~\ident{root}~\kvd{with} \\
 \quad |~\ident{Record}(\ident{main})\rightarrow \\
 \quad \quad \kvd{match}~\ident{Map.find}~\str{temp}~\ident{main}~\kvd{with} \\
 \quad \quad |~\ident{Number}(\ident{num})\rightarrow \ident{printfn}~\str{Lovely \%f!}~\ident{num} \\
 \quad \quad |~\_\rightarrow \ident{failwith}~\str{Incorrect format} \\
 \quad |~\_\rightarrow \ident{failwith}~\str{Incorrect format} \\
 |~\_\rightarrow \ident{failwith}~\str{Incorrect format}
\end{array}
\end{equation*}
The code assumes that the response has a particular shape described in the documentation. The
root node must be a record with a \strf{main} field, which has to be another record containing
a numerical \strf{temp} field representing the current temperature. When the shape is different,
the code fails. While not immediately unsound, the code is prone to errors if strings are
misspelled or incorrect shape assumed.

Using the JSON type provider from F\# Data, we can write code with exactly the
same functionality in two lines:
\vspace{-0.1em}
\begin{equation*}
\begin{array}{l}
 \kvd{type}~\ident{W} = \ident{JsonProvider}\langl\str{http://api.owm.org/?q=NYC}\rangl \\[0.1em]
 \ident{printfn}~\str{Lovely \%f!}~(\ident{W.GetSample().Main.Temp})
\end{array}
\end{equation*}
$\ident{JsonProvider}\langl\str{...}\rangl$ invokes a type provider \cite{fsharp-typeprov} at
compile-time with the URL as a sample. The type provider infers the structure of the response
and provides a type with a \ident{GetSample} method that returns a parsed JSON with nested
properties \ident{Main.Temp}, returning the temperature as a number.

In short, the \emph{types} come from the sample \emph{data}. In our experience, this technique is
both practical and surprisingly effective in achieving more sound information interchange
in heterogeneous systems. Our contributions are as follows:

\begin{itemize}
\item We present F\# Data type providers for XML, CSV and JSON (\S\ref{sec:providers})
  and practical aspects of their implementation that contributed to their industrial
  adoption (\S\ref{sec:impl}).

\item We describe a predictable shape inference algorithm for structured data formats,
  based on a \emph{preferred shape} relation, that underlies the type providers
  (\S\ref{sec:inference}).

\item We give a formal model (\S\ref{sec:formal}) and use it to prove
  \emph{relative type safety} for the type providers (\S\ref{sec:safety}).
\end{itemize}

%
%

\section{Type providers for structured data}
\label{sec:providers}

We start with an informal overview that shows how F\# Data type providers simplify working with
JSON and XML. We introduce the necessary aspects of F\# type providers along the way. The examples
in this section also illustrate the key design principles of the shape inference algorithm:

\begin{itemize}
\item The mechanism is predictable (\S\ref{sec:impl-stable}). The user directly works with the
  provided types and should understand why a specific type was produced from a given sample.

\item The type providers prefer F\# object types with properties. This allows extensible
  (open-world) data formats (\S\ref{sec:providers-xml}) and it interacts well with developer tooling
  (\S\ref{sec:providers-json}).

\item The above makes our techniques applicable to any language with nominal
  object types (e.g.~variations of Java or C\# with a type provider mechanism added).

\item Finally, we handle practical concerns including
  support for different numerical types, \kvd{null} and missing data.
\end{itemize}

\noindent
The supplementary screencast provides further illustration of the practical developer
experience using F\# Data.\footnote{Available at \url{http://tomasp.net/academic/papers/fsharp-data}.}


\subsection{Working with JSON documents}
\label{sec:providers-json}

The JSON format is a popular data exchange format based on
JavaScript data structures. The following is the definition of \ident{JsonValue}
used earlier (\S\ref{sec:introduction}) to represent JSON data:
\begin{equation*}
\begin{array}{l}
 \kvd{type}~\ident{JsonValue} = \\[0.1em]
 \quad|~ \ident{Number}~\kvd{of}~\ident{float} \\[0.1em]
 \quad|~ \ident{Boolean}~\kvd{of}~\ident{bool} \\[0.1em]
 \quad|~ \ident{String}~\kvd{of}~\ident{string}\\[0.1em]
 \quad|~ \ident{Record}~\kvd{of}~\ident{Map}\langl\ident{string}, \ident{JsonValue}\rangl \\[0.1em]
 \quad|~ \ident{Array}~\kvd{of}~\ident{JsonValue}[] \\[0.1em]
 \quad|~ \ident{Null} \\[0.1em]
\end{array}
\end{equation*}
The earlier example used only a nested record containing a number. To demonstrate other
aspects of the JSON type provider, we look at an example that also involves an array:
{\small{
\begin{verbatim}
  [ { "name":"Jan", "age":25 },
    { "name":"Tomas" },
    { "name":"Alexander", "age":3.5 } ]
\end{verbatim}
}}
\noindent
The standard way to print the names and ages would be to pattern match on the parsed
\ident{JsonValue}, check that the top-level node is a \ident{Array} and iterate over the elements
checking that each element is a \ident{Record} with certain properties. We would throw an exception
for values of an incorrect shape. As before, the code would specify field names as strings, which
is error prone and can not be statically checked.

Assuming \strf{people.json} is the above example and \ident{data} is a string containing
JSON of the same shape, we can write:
\begin{equation*}
\begin{array}{l}
 \kvd{type}~\ident{People}~=~\ident{JsonProvider}\langl\str{people.json}\rangl\hspace{1em} \\[0.6em]
 \kvd{for}~\ident{item}~\kvd{in}~\ident{People.Parse}(\ident{data})~\kvd{do}\\[0.1em]
 \quad\ident{printf}~\str{\%s }~\ident{item.Name}\\[0.1em]
 \quad\ident{Option.iter}~(\ident{printf}~\str{(\%f)})~\ident{item.Age}
\end{array}
\end{equation*}
We now use a local file as a sample for
the type inference, but then processes data from another source. The code achieves a similar
simplicity as when using dynamically typed languages, but it is statically type-checked.

\paragraph{Type providers.}
The notation $\ident{JsonProvider}\langl\str{people.json}\rangl$ passes a \emph{static parameter}
to the type provider. Static parameters are resolved at compile-time and have to be constant.
The provider analyzes the sample and provides a type  \ident{People}. F\# editors also
execute the type provider at development-time and use the provided types
for auto-completion on ``.'' and for background type-checking.

The \ident{JsonProvider} uses a shape inference algorithm and provides
the following F\# types for the sample:
\begin{equation*}
\begin{array}{l}
 \kvd{type}~\ident{Entity}~=  \\[-0.05em]
 \quad \kvd{member}~\ident{Name}~:~\ident{string} \\[-0.05em]
 \quad \kvd{member}~\ident{Age}~:~\ident{option}\langl \ident{float}\rangl \\[0.4em]
 \kvd{type}~\ident{People}~=  \\[-0.05em]
 \quad \kvd{member}~\ident{GetSample}~:~\ident{unit}~\rightarrow~\ident{Entity}[] \\[-0.05em]
 \quad \kvd{member}~\ident{Parse}~:~\ident{string}~\rightarrow~\ident{Entity}[] \\[-0.05em]
\end{array}
\end{equation*}
The type \ident{Entity} represents the person. The field \ident{Name} is available for all
sample values and is inferred as \ident{string}. The field \ident{Age} is marked as optional,
because the value is missing in one sample. In F\#, we use \ident{Option.iter} to call
the specified function (printing) only when an optional value is available. The two age values
are an integer $25$ and a float $3.5$ and so the common inferred type is \ident{float}.
The names of the properties are normalized to follow standard F\# naming conventions
as discussed later (\S\ref{sec:impl-naming}).

The type \ident{People} has two methods for reading data. \ident{GetSample} parses the
sample used for the inference and \ident{Parse} parses a JSON string. This lets us read
data at runtime, provided that it has the same shape as the static sample.

\paragraph{Error handling.}
In addition to the structure of the types, the type provider also specifies the code of
operations such as \ident{item.Name}. The runtime behaviour is
the same as in the earlier hand-written sample (\S\ref{sec:introduction}) -- a member access
throws an exception if data does not have the expected shape.

Informally, the safety property (\S\ref{sec:safety}) states that if the inputs are compatible
with one of the static samples (i.e.~the samples are representative), then no exceptions will
occur. In other words, we cannot avoid all failures, but we can prevent some. Moreover, if
\url{http://openweathermap.org} changes the shape of the response, the code in \S\ref{sec:introduction}
will not re-compile and the developer knows that the code needs to be corrected.

\paragraph{Objects with properties.}
The sample code is easy to write thanks to the fact that most F\# editors provide auto-completion
when ``.'' is typed (see the supplementary screencast). The developer does not need to examine the
sample JSON file to see what fields are available. To support this scenario, our type providers
map the inferred shapes to F\# objects with (possibly optional) properties.

This is demonstrated by the fact that \ident{Age} becomes an optional member.
An alternative is to provide two different record types (one with \ident{Name} and one with
\ident{Name} and \ident{Age}), but this would complicate the processing code.
It is worth noting that languages with stronger tooling around pattern matching
such as Idris \cite{idris-tools} might have different preferences.


\subsection{Processing XML documents}
\label{sec:providers-xml}

XML documents are formed by nested elements with attributes. We can view elements as records with
a field for each attribute and an additional special field for the nested contents (which is a
collection of elements).

Consider a simple extensible document format where a root element {\ttfamily\small <doc>} can
contain a number of document elements, one of which is {\ttfamily\small <heading>} representing
headings:
{\small{
\begin{verbatim}
  <doc>
    <heading>Working with JSON</heading>
    <p>Type providers make this easy.</p>
    <heading>Working with XML</heading>
    <p>Processing XML is as easy as JSON.</p>
    <image source="xml.png" />
  </doc>
\end{verbatim}
}}
\noindent
The F\# Data library has been designed primarily to simplify reading of data. For example,
say we want to print all headings in the document. The sample shows a part of the document structure
(in particular the {\ttfamily\small <heading>} element), but it does not show all possible elements
(say, {\ttfamily\small <table>}). Assuming the above document is \strf{sample.xml}, we can write:
\noindent
\begin{equation*}
\begin{array}{l}
 \kvd{type}~\ident{Document}~=~\ident{XmlProvider}\langl\str{sample.xml}\rangl\hspace{1em} \\[0.5em]
 \kvd{let}~\ident{root}~=~\ident{Document.Load}(\str{pldi/another.xml})\\
 \kvd{for}~\ident{elem}~\kvd{in}~\ident{root.Doc}~\kvd{do}\\
 \quad\ident{Option.iter}~(\ident{printf}~\str{ - \%s})~\ident{elem.Heading}\\
\end{array}
\end{equation*}
The example iterates over a collection of elements returned by \ident{root.Doc}. The type of \ident{elem}
provides typed access to elements known statically from the sample and so we can write \ident{elem.Heading},
which returns an optional string value.

\paragraph{Open world.}
By its nature, XML is extensible and the sample cannot include all possible nodes.\footnote{Even
when the document structure is defined using XML Schema, documents may contain elements prefixed
with other namespaces.} This is the fundamental \emph{open world assumption} about external data.
Actual input might be an element about which nothing is known.

For this reason, we do not infer a closed choice between heading, paragraph and image. In the
subsequent formalization, we introduce a \emph{top shape} (\S\ref{sec:inference-types}) and extend
it with labels capturing the statically known possibilities (\S\ref{sec:inference-vars}). The
\emph{labelled top shape} is mapped to the following type:
\begin{equation*}
\begin{array}{l}
 \kvd{type}~\ident{Element}~=  \\[0.1em]
 \quad \kvd{member}~\ident{Heading}~:~\ident{option}\langl \ident{string} \rangl\\
 \quad \kvd{member}~\ident{Paragraph}~:~\ident{option}\langl \ident{string} \rangl\\
 \quad \kvd{member}~\ident{Image}~:~\ident{option}\langl \ident{Image} \rangl\\
\end{array}
\end{equation*}
\ident{Element} is an abstract type with properties. It can represent the statically known
elements, but it is not limited to them. For a table element, all three properties would return
\ident{None}.

Using a type with optional properties provides access to the elements known statically from the
sample. However the user needs to explicitly handle the case when a value is not a statically
known element. In object-oriented languages, the same could be done by providing a class hierarchy,
but this loses the easy discoverability when ``.'' is typed.

The provided type is also consistent with our design principles, which prefers optional properties.
The gain is that the provided types support both open-world data and developer tooling. It is also
worth noting that our shape inference uses labelled top shapes only as the last resort
(Lemma~\ref{thm:lub}, \S\ref{sec:impl-hetero}).


\subsection{Real-world JSON services}
\label{sec:providers-sum}

Throughout the introduction, we used data sets that demonstrate the typical problems frequent
in the real-world (missing data, inconsistent encoding of primitive values and heterogeneous shapes).
The government debt information returned by the World Bank\footnote{Available at
\url{http://data.worldbank.org}} includes all three:
{\small{
\begin{verbatim}
  [ { "pages": 5 },
    [ { "indicator": "GC.DOD.TOTL.GD.ZS",
        "date": "2012", "value": null },
      { "indicator": "GC.DOD.TOTL.GD.ZS",
        "date": "2010", "value": "35.14229" } ] ]
\end{verbatim}
}}
\noindent
First, the field \ident{value} is \kvd{null} for some records. Second, numbers in JSON can be
represented as numeric literals (without quotes), but here, they are returned as string literals
instead.\footnote{This is often used to avoid non-standard numerical types of JavaScript.}
Finally, the top-level element is a collection containing two values of different shape.
The record contains meta-data with the total number of pages and the array contains the data.
F\# Data supports a concept of heterogeneous collection (outlined in in \S\ref{sec:impl-hetero})
and provides the following type:
\begin{equation*}
\begin{array}{l}
\kvd{type}~\ident{Record}~=  \\
 \quad \kvd{member}~\ident{Pages}~:~\ident{int} \\[0.5em]
\kvd{type}~\ident{Item}~=  \\
 \quad \kvd{member}~\ident{Date}~:~\ident{int} \\
 \quad \kvd{member}~\ident{Indicator}~:~\ident{string} \\
 \quad \kvd{member}~\ident{Value}~:~\ident{option}\langl \ident{float}\rangl\\[0.5em]
\kvd{type}~\ident{WorldBank}~=  \\
 \quad \kvd{member}~\ident{Record}~:~\ident{Record}\\
 \quad \kvd{member}~\ident{Array}~:~\ident{Item}[]\\[0.5em]
\end{array}
\end{equation*}
The inference for heterogeneous collections infers the multiplicities and shapes of nested
elements. As there is exactly one record and one array, the provided type \ident{WorldBank} exposes
them as properties \ident{Record} and \ident{Array}.

In addition to type providers for JSON and XML, F\# Data also implements a type provider for CSV
(\S\ref{sec:impl-parsing}). We treat CSV files as lists of records (with field for each column)
and so CSV is handled directly by our inference algorithm.

%
%

\section{Shape inference for structured data}
\label{sec:inference}

The shape inference algorithm for structured data is based on a shape preference relation. When
inferring the shape, it infers the most specific shapes of individual values (CSV rows, JSON or XML
nodes) and recursively finds a common shape of all child nodes or all sample documents.

We first define the shape of structured data $\sigma$. We use the term \emph{shape} to distinguish
shapes of data from programming language \emph{types} $\tau$ (type providers generate the latter from the former).
Next, we define the preference relation on shapes $\sigma$ and describe the algorithm
for finding a common shape.

The shape algebra and inference presented here is influenced by the design principles
we outlined earlier and by the type definitions available in the F\# language.
The same principles apply to other languages, but details may differ, for example
with respect to numerical types and missing data.


\subsection{Inferred shapes}
\label{sec:inference-types}

We distinguish between \emph{non-nullable shapes} that always have a valid value (written as
$\hat{\sigma}$) and \emph{nullable shapes} that encompass missing and \kvd{null} values
(written as $\sigma$). We write $\nu$ for record names and record field names.
\begin{equation*}
\begin{array}{rcl}
 \hat{\sigma} &\narrow{=}& \nu \; \{ \nu_1 \!:\! \sigma_1, \ldots, \nu_n \!:\! \sigma_n,\; \rho_i  \} \\[0.1em]
                &\narrow{|}& \ident{float} \lsep \ident{int} \lsep \ident{bool} \lsep \ident{string}
 \\[0.6em]
       \sigma &\narrow{=}& ~\hat{\sigma}~ \lsep \kvd{nullable}\langl \hat{\sigma} \rangl \lsep [\sigma] \lsep \kvd{any} \lsep \kvd{null}  \lsep ~\bot~
\end{array}
\end{equation*}

\noindent
Non-nullable shapes include records (consisting of a name and fields with their shapes) and
primitives. The row variables $\rho_i$ are discussed below. Names of records arising from XML are the names of the XML elements.
For JSON records we always use a single name $\,\bullet$. We assume that record fields can be freely
reordered.

We include two numerical primitives, \ident{int} for integers and \ident{float} for floating-point
numbers. The two are related by the preference relation and we prefer \ident{int}.

Any non-nullable shape $\hat{\sigma}$ can be wrapped as $\kvd{nullable}\langl\hat{\sigma}\rangl$ to
explicitly permit the \kvd{null} value. Type providers map \kvd{nullable} shapes to the F\# option
type. A collection $[\sigma]$ is also nullable and \kvd{null} values are treated as empty
collections. This is motivated by the fact that a \kvd{null} collection is usually
handled as an empty collection by client code. However there is a range of design alternatives (make collections
non-nullable or treat \kvd{null} \ident{string} as an empty string).

The shape $\kvd{null}$ is inhabited by the $\kvd{null}$ value (using an overloaded
notation) and $\bot$ is the bottom shape. The \kvd{any} shape is the top shape, but we later add
labels for statically known alternative shapes (\S\ref{sec:inference-vars}) as
discussed earlier (\S\ref{sec:providers-xml}).

During inference we use row-variables $\rho_i$ \cite{rows-cardelli} in record shapes to represent
the flexibility arising from records in samples. For example, when a
record $\ident{Point}\,\{ \ident{x} \mapsto \num{3} \}$ occurs in a sample,
it may be combined with $\ident{Point}\,\{\, \ident{x}\mapsto\num{3}, \ident{y}\mapsto\num{4} \}$ that contains more fields. The
overall shape inferred must account for the fact that any extra fields are optional,
giving an inferred shape $\ident{Point}\,\{ \ident{x}\!:\!\ident{int},~ \ident{y}\!:\!\kvd{nullable}\langl\ident{int}\rangl\}$.


\begin{figure}
\begin{center}
\includegraphics[scale=0.80,trim=5mm 5mm 5mm 5mm,clip]{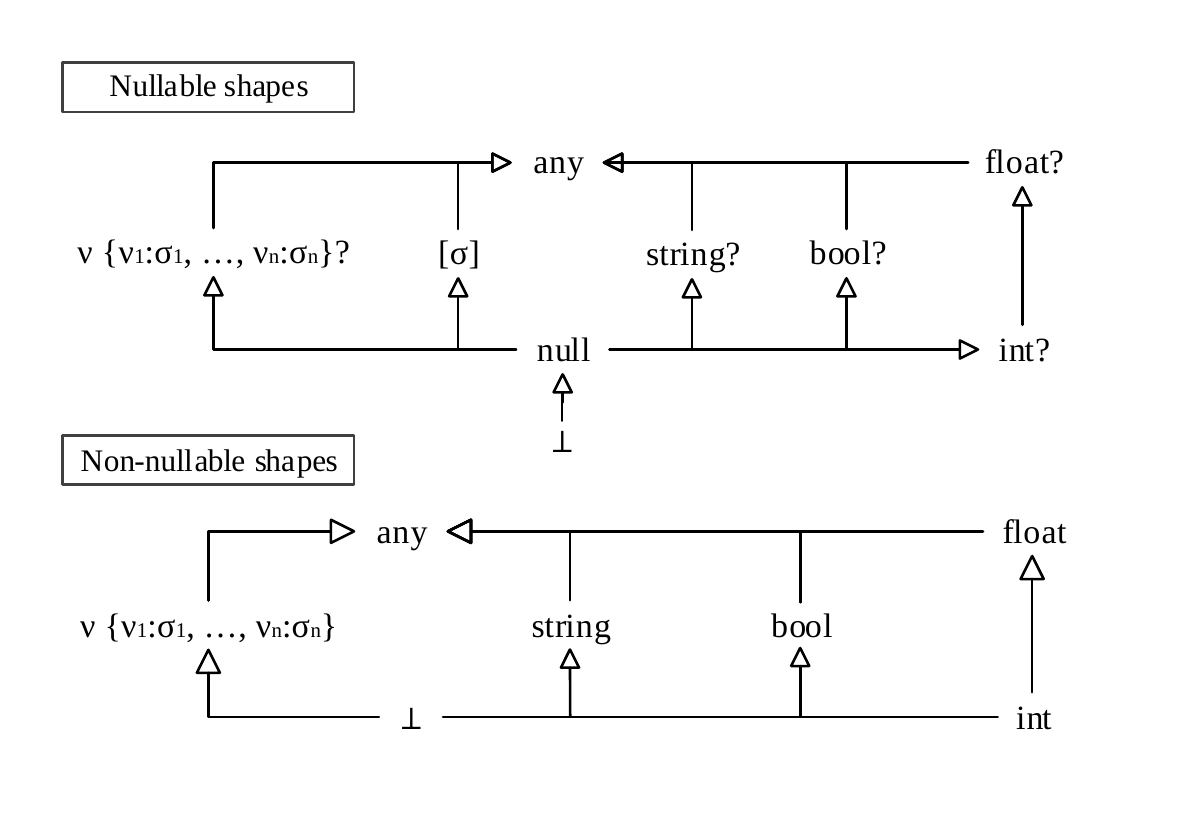} 
\end{center}
\vspace{-0.5em}
\caption{Important aspects of the preferred shape relation}
\label{fig:subtyping-diagram}
\end{figure}


\subsection{Preferred shape relation}
\label{sec:inference-subtyping}

Figure~\ref{fig:subtyping-diagram} provides an intuition about the preference between
shapes. The lower part shows non-nullable shapes (with records and primitives) and the upper part
shows nullable shapes with \kvd{null}, collections and nullable shapes. In the diagram, we
abbreviate $\kvd{nullable}\langl\sigma\rangl$ as $\sigma?$ and we omit links between the two parts;
a shape $\hat{\sigma}$ is preferred over $\kvd{nullable}\langl\hat{\sigma}\rangl$.

\begin{definition}
For ground $\sigma_1$, $\sigma_2$ (i.e. without $\rho_i$ variables), we write $\sigma_1 \sqsubseteq \sigma_2$ to denote that $\sigma_1$ is preferred over $\sigma_2$.
The shape preference relation is defined as a transitive reflexive closure of the following rules:

\noindent
\begin{align}
  \label{eq:sub-prim}
  \ident{int}\,&\sqsubseteq\,\ident{float}&\\[-0.2em]
  \label{eq:sub-null}
  \kvd{null} &\sqsubseteq \sigma  &(\textnormal{for}~\sigma \neq \hat{\sigma})  \\[-0.2em]
  \label{eq:sub-opt}
  \hat{\sigma} &\sqsubseteq \kvd{nullable}\langl\hat{\sigma}\rangl  &(\textnormal{for all}~\hat{\sigma})\\[-0.2em]
  \label{eq:sub-opt-cov}
  \kvd{nullable}\langl\hat{\sigma_1}\rangl &\sqsubseteq
    \kvd{nullable}\langl\hat{\sigma_2}\rangl  &(\textnormal{if}~\hat{\sigma_1} \sqsubseteq \hat{\sigma_2})\\[-0.2em]
  \label{eq:sub-col}
  [\sigma_1] &\sqsubseteq [\sigma_2]  &(\textnormal{if}~\sigma_1 \sqsubseteq \sigma_2) \\[-0.2em]
  \label{eq:sub-bot}
  \bot &\sqsubseteq \sigma  &(\textnormal{for all}~\sigma)\\[-0.2em]
  \label{eq:sub-var-top}
  \sigma &\sqsubseteq \kvd{any}
\end{align}
\vspace{-2em}

\noindent
\begin{align}
\label{eq:sub-record1}
\begin{array}{l}
 \nu~\{ \nu_1\!:\!\sigma_1, .., \nu_n\!:\!\sigma_n \} \sqsubseteq \\
 \quad\nu~\{ \nu_1\!:\!\sigma_1', .., \nu_n\!:\!\sigma_n' \}
\end{array} \qquad ~~~~(\textnormal{if}~\sigma_i \sqsubseteq \sigma_i')
\end{align}

\vspace{-1.5em}

\begin{align}
\label{eq:sub-record2}
\begin{array}{l}
 \nu~\{ \nu_1\!:\!\sigma_1, .., \nu_n\!:\!\sigma_n \} \sqsubseteq \\
 \quad\nu~\{ \nu_1\!:\!\sigma_1, .., .., \nu_m\!:\!\sigma_m \}
\end{array} \quad~~\, (\textnormal{when}~m \leq n)
\end{align}

\end{definition}


\begin{figure*}[t]
\noindent
\begin{equation*}
\begin{array}{rcll}
\ident{csh}(\sigma, \sigma) &=& \sigma & (\emph{eq})\\
\ident{csh}([\sigma_1], [\sigma_2]) &=& [\ident{csh}(\sigma_1, \sigma_2)] & (\emph{list}) \\
\ident{csh}(\bot, \sigma) = \ident{csh}(\sigma, \bot) &=& \sigma & (\emph{bot})\\
\ident{csh}(\kvd{null}, \sigma) = \ident{csh}(\sigma, \kvd{null}) &=& \addopt{\sigma}& (\emph{null}) \\
\ident{csh}(\kvd{any}, \sigma) = \ident{csh}(\sigma, \kvd{any}) &=& \kvd{any} & (\emph{top})\\
\ident{csh}(\kvd{float}, \kvd{int}) = \ident{csh}(\kvd{int}, \kvd{float}) &=& \kvd{float} & (\emph{num})\\
\ident{csh}(\sigma_2, \kvd{nullable}\langl \hat{\sigma_1} \rangl) = \ident{csh}(\kvd{nullable}\langl \hat{\sigma_1} \rangl, \sigma_2) &=& \addopt{\ident{csh}(\hat{\sigma_1}, \sigma_2)} & (\emph{opt})\\
\ident{csh}(\nu \; \{ \nu_1 \!:\! \sigma_1,  \ldots \;, \nu_n \!:\! \sigma_n \},
  \nu \; \{ \nu_1 \!:\! \sigma'_1, \; \ldots \;, \nu_n \!:\! \sigma'_n \}) &=&
  \nu \; \{ \nu_1 \!:\! \ident{csh}(\sigma_1, \sigma'_1), \; \ldots \; , \nu_n \!:\! \ident{csh}(\sigma_n, \sigma'_n) \} & (\emph{recd}) \\
\ident{csh}(\sigma_1, \sigma_2) &=& \kvd{any} \quad(\textnormal{when}~\sigma_1\neq\nu\;\{ \ldots \}~\textnormal{or}~\sigma_2\neq\nu\;\{ \ldots \} ) & (\emph{any})\\
\end{array}
\end{equation*}

\begin{equation*}
\quad\qquad
\begin{array}{rcll}
 \addopt{\hat{\sigma}} &\narrow{=}& \kvd{nullable}\langl\hat{\sigma}\rangl &(\textnormal{non-nullable shapes})\\
 \addopt{\sigma} &\narrow{=}& \sigma &(\textnormal{otherwise})
\end{array}
\qquad\qquad
\begin{array}{rcll}
 \dropopt{\kvd{nullable}\langl\hat{\sigma}\rangl} &\narrow{=}& \hat{\sigma} &(\textnormal{nullable shape})\\
 \dropopt{\sigma} &\narrow{=}& \sigma &(\textnormal{otherwise})
\end{array}
\end{equation*}

\caption{The rules that define the common preferred shape function}
\label{fig:subtyping-cst}
\end{figure*}


\noindent
Here is a summary of the key aspects of the definition:
\begin{itemize}
\item Numeric shape with smaller range is preferred (\ref{eq:sub-prim}) and we choose 32-bit
\ident{int} over \ident{float} when possible.

\item The \kvd{null} shape is preferred over all nullable shapes (\ref{eq:sub-null}), i.e.
  all shapes excluding non-nullable shapes $\hat{\sigma}$. Any non-nullable shape is preferred
  over its nullable version (\ref{eq:sub-opt})

\item Nullable shapes and collections are
  covariant (\ref{eq:sub-opt-cov}, \ref{eq:sub-col}).

\item There is a bottom shape (\ref{eq:sub-bot}) and \kvd{any} behaves as the top shape, because
  any shape $\sigma$ is preferred over \kvd{any} (\ref{eq:sub-var-top}).

\item The record shapes are covariant (\ref{eq:sub-record1}) and preferred record can have
  additional fields (\ref{eq:sub-record2}).
\end{itemize}

\noindent


\subsection{Common preferred shape relation}
\label{sec:inference-commonsuper}

Given two ground shapes, the \emph{common preferred shape} is the least upper bound of the
shape with respect to the preferred shape relation. The least upper bound prefers records,
which is important for usability as discussed earlier (\S\ref{sec:providers-xml}).
\begin{definition}
A \emph{common preferred shape} of two ground shapes $\sigma_1$ and $\sigma_2$ is a shape
$\ident{csh}(\sigma_1, \sigma_2)$ obtained according to Figure~\ref{fig:subtyping-cst}.
The rules are matched from top to bottom.
\end{definition}

\noindent
The fact that the rules of \ident{csh} are matched from top to bottom resolves the ambiguity
between certain rules. Most importantly (\emph{any}) is used only as the last resort.

When finding a common shape of two records (\emph{recd}) we find common preferred shapes of their
respective fields. We can find a common shape of two different numbers (\emph{num}); for two collections,
we combine their elements (\emph{list}). When one shape is nullable (\emph{opt}), we find
the common non-nullable shape and ensure the result is nullable using $\addopt{-}$, which
is also applied when one of the shapes is \kvd{null} (\emph{null}).

When defined, \ident{csh} finds the unique least upper bound of the partially ordered
set of ground shapes (Lemma~\ref{thm:lub}).

\begin{lemma}[Least upper bound]
\label{thm:lub}
For ground $\sigma_1$ and $\sigma_2$, if $\ident{csh}(\sigma_1, \sigma_2) \vdash \sigma$ then $\sigma$ is a least upper bound by $\sqsupseteq$.
\end{lemma}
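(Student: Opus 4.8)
The plan is to show two things for any ground $\sigma_1,\sigma_2$ with $\ident{csh}(\sigma_1,\sigma_2) = \sigma$: (i) $\sigma$ is an \emph{upper bound}, i.e.\ $\sigma_1 \sqsubseteq \sigma$ and $\sigma_2 \sqsubseteq \sigma$; and (ii) $\sigma$ is \emph{least} among upper bounds, i.e.\ for every $\sigma'$ with $\sigma_1 \sqsubseteq \sigma'$ and $\sigma_2 \sqsubseteq \sigma'$ we have $\sigma \sqsubseteq \sigma'$. Together these give that $\sigma$ is a least upper bound with respect to $\sqsupseteq$ in the sense stated. Both parts proceed by induction on the derivation of $\ident{csh}(\sigma_1,\sigma_2) = \sigma$, equivalently on the combined structure of $\sigma_1$ and $\sigma_2$, with a case analysis following the top-to-bottom rule order of Figure~\ref{fig:subtyping-cst}.

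First I would establish (i), the upper-bound property, by going through the \ident{csh} rules one by one. The cases \emph{(eq)}, \emph{(bot)}, \emph{(top)} are immediate from reflexivity, rule~(\ref{eq:sub-bot}), and rule~(\ref{eq:sub-var-top}) respectively; \emph{(num)} is rule~(\ref{eq:sub-prim}) together with reflexivity; \emph{(any)} is again rule~(\ref{eq:sub-var-top}). For \emph{(list)} and \emph{(recd)} the induction hypothesis gives $\sigma_i \sqsubseteq \ident{csh}(\sigma_i,\sigma_i')$ componentwise, and covariance rules~(\ref{eq:sub-col}), (\ref{eq:sub-record1}) lift this to the compound shape; note that since both record arguments of \emph{(recd)} are required to have the same field names $\nu_1,\dots,\nu_n$, rule~(\ref{eq:sub-record2}) is not needed here. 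The cases \emph{(null)} and \emph{(opt)} require a small auxiliary observation about $\addopt{-}$: for any $\sigma$, $\sigma \sqsubseteq \addopt{\sigma}$ (trivial when $\sigma$ is not a bare $\hat\sigma$, and rule~(\ref{eq:sub-opt}) when it is), and $\addopt{-}$ is monotone with respect to $\sqsubseteq$ (using (\ref{eq:sub-opt-cov}) in the non-nullable case). For \emph{(opt)}, say $\sigma_1 = \kvd{nullable}\langl\hat{\sigma_1}\rangl$; from the IH $\hat{\sigma_1} \sqsubseteq \ident{csh}(\hat{\sigma_1},\sigma_2)$ and $\sigma_2 \sqsubseteq \ident{csh}(\hat{\sigma_1},\sigma_2)$, apply $\addopt{-}$ and monotonicity, and use (\ref{eq:sub-opt}) to get $\kvd{nullable}\langl\hat{\sigma_1}\rangl \sqsubseteq \addopt{\ident{csh}(\hat{\sigma_1},\sigma_2)}$ once one checks $\kvd{nullable}\langl\hat{\sigma_1}\rangl \sqsubseteq \kvd{nullable}\langl\hat\tau\rangl$ whenever $\hat{\sigma_1}\sqsubseteq\hat\tau$, which is (\ref{eq:sub-opt-cov}).

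The substantive part is (ii), leastness, and this is where I expect the main obstacle. Fix an upper bound $\sigma'$ of $\sigma_1,\sigma_2$. The difficulty is that $\sigma'$ need not be ``syntactically aligned'' with either argument — e.g.\ it could be $\kvd{any}$, or a $\kvd{nullable}\langl\cdot\rangl$ wrapping something that dominates both, or a record with fewer fields (via (\ref{eq:sub-record2})). So I would first prove an \emph{inversion} or \emph{generation} lemma for $\sqsubseteq$: characterize, for each syntactic form of $\sigma'$, exactly which shapes $\tau$ satisfy $\tau \sqsubseteq \sigma'$. For instance, $\tau \sqsubseteq \ident{float}$ iff $\tau \in \{\ident{int},\ident{float},\bot\}$; $\tau \sqsubseteq \kvd{nullable}\langl\hat\sigma'\rangl$ iff $\tau$ is $\bot$, $\kvd{null}$, some $\hat\tau \sqsubseteq \hat\sigma'$, or $\kvd{nullable}\langl\hat\tau\rangl$ with $\hat\tau\sqsubseteq\hat\sigma'$; $\tau\sqsubseteq[\sigma']$ iff $\tau=\bot$, $\tau=\kvd{null}$, or $\tau=[\tau_0]$ with $\tau_0\sqsubseteq\sigma'$; and $\tau\sqsubseteq\nu\{\ldots\}$ forces $\tau$ to be a $\nu$-record with at least those fields, pointwise preferred. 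This requires care because $\sqsubseteq$ is defined as a reflexive–transitive closure, so I would argue that any derivation can be normalized so that transitivity steps at a given shape constructor compose via the covariance rules only — essentially a confluence/commutation argument on the rewrite rules. Once the inversion lemma is in hand, leastness follows by the same case split as before: in each \ident{csh} case, apply inversion to $\sigma_1\sqsubseteq\sigma'$ and $\sigma_2\sqsubseteq\sigma'$ to expose the structure of $\sigma'$, then use the IH on the components to conclude $\ident{csh}(\sigma_1,\sigma_2)\sqsubseteq\sigma'$; the \emph{(any)} case is trivial since $\sigma'$ can only be $\kvd{any}$ itself when one side is a non-record that is not otherwise handled, and the \emph{(null)}/\emph{(opt)} cases again use monotonicity of $\addopt{-}$ plus the fact that $\addopt{\tau}$ is the least nullable-or-equal shape above $\tau$. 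A subtlety to flag explicitly: the lemma is stated conditionally (``if $\ident{csh}(\sigma_1,\sigma_2)\vdash\sigma$''), so I only need to handle inputs on which \ident{csh} is defined; I would note in passing where definedness could fail (e.g.\ records with the same name $\nu$ but genuinely incompatible field sets are handled by \emph{(any)} via (\ref{eq:sub-record2}) only when one field set contains the other — otherwise the top-to-bottom matching still yields \emph{(any)}, so in fact \ident{csh} is total, but establishing totality is not needed for this lemma).
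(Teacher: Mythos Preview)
Your plan is correct and follows the paper's approach: both argue by induction on the structure of $\sigma_1,\sigma_2$, and the paper's (very brief) proof sketch singles out precisely the two observations that drive your leastness argument---that \ident{csh} produces $\kvd{any}$ only via (\emph{top}) or (\emph{any}), and that nullability is introduced only through $\addopt{-}$ in (\emph{null}) and (\emph{opt}). Your explicit inversion lemma for $\sqsubseteq$ and the clean separation into an upper-bound part and a leastness part are sound elaborations of what the paper leaves implicit in its three-sentence outline.

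One factual slip in your closing parenthetical: \ident{csh} is \emph{not} total on ground shapes. The side condition on (\emph{any}) requires that at least one argument be a non-record, so two records---whether with distinct names $\nu\neq\nu'$, or with the same name but mismatching field sets---fall through every clause of Figure~\ref{fig:subtyping-cst}. This is exactly why the paper introduces row variables $\rho_i$ and the ground substitution $\theta$ in \S\ref{sec:formal-inferval}: fields are aligned \emph{before} \ident{csh} is invoked so that the precondition of (\emph{recd}) is met. As you yourself note, the lemma is stated conditionally on $\ident{csh}(\sigma_1,\sigma_2)\vdash\sigma$, so this partiality does not affect your argument; just drop the claim of totality.
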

\begin{proof}
By induction over the structure of the shapes $\sigma_1,\sigma_2$. Note that \ident{csh} only infers the top shape
\kvd{any} when on of the shapes is the top shape (\emph{top}) or when there is no other option
(\emph{any}); a nullable shape is introduced in $\addopt{-}$ only when no non-nullable shape can
be used (\emph{null}), (\emph{opt}).
\end{proof}

\subsection{Inferring shapes from samples}
\label{sec:formal-inferval}

We now specify how we obtain the shape from data. As clarified later (\S\ref{sec:impl-parsing}),
we represent JSON, XML and CSV documents using the same first-order \emph{data} value:

\noindent
\begin{equation*}
\begin{array}{lcl}
  \\[-0.5em]
 d &\narrow{=}& i \lsep f \lsep s \lsep \kvd{true} \lsep \kvd{false} \lsep \kvd{null} \\[0.1em]
   &\narrow{|}& [d_1; \ldots; d_n] \lsep \nu~\{ \nu_1 \mapsto d_1, \ldots, \nu_n \mapsto d_n \}
\end{array}
\end{equation*}

~

\noindent
The definition includes primitive values ($i$ for integers, $f$ for floats
and $s$ for strings) and \kvd{null}. A collection is written as a
list of values in square brackets. A record starts with a name $\nu$, followed by a
sequence of field assignments $\nu_i \mapsto d_i$.

Figure~\ref{fig:shape-inference} defines a mapping $\semalt{d_1,\ldots,d_n}$ which turns a
collection of sample data $d_1, \ldots, d_n$ into a shape $\sigma$. Before applying $S$, we assume
each record in each $d_i$ is marked with a fresh row inference variable $\rho_i$.
We then choose a ground, minimal substitution $\theta$ for row variables. Because $\rho_i$ variables
represent potentially missing fields, the $\addopt{-}$ operator from Figure~\ref{fig:subtyping-cst}
is applied to all types in the vector.

This is sufficient to equate the record field labels and satisfy the pre-conditions in rule
(\emph{recd}) when multiple record shapes are combined. The \ident{csh} function is not defined for
two records with mis-matching fields, however, the fields can always be made to match, through
a substitution for row variables. In practice, $\theta$ is found via row
variable unification \cite{rows-remy}. We omit the details here. No $\rho_i$ variables remain after
inference as the substitution chosen is ground.

Primitive  values are mapped to their corresponding shapes.
When inferring a shape from multiple samples, we use the common preferred shape relation to find a
common shape for all values (starting with $\bot$). This operation is used
when calling a type provider with multiple samples and also when inferring the shape of collection
values.


\begin{figure}[!h]
\begin{equation*}
\begin{array}{rclcrclcrcl}
 \semalt{i} &\narrow{=}& \ident{int} && \semalt{\kvd{null}}  &\narrow{=}& \kvd{null} && \semalt{\kvd{true}} &\narrow{=}& \ident{bool}\\
 \semalt{f} &\narrow{=}& \ident{float} && \semalt{s} &\narrow{=}& \ident{string} && \semalt{\kvd{false}}  &\narrow{=}& \ident{bool}\\
\end{array}
\end{equation*}
\noindent
\vspace{-0.5em}
\begin{equation*}
\begin{array}{l}
 \semalt{[d_1; \ldots; d_n]} = [\semalt{d_1, \ldots, d_n}]
 \\[0.5em]
 \semalt{\nu~\{ \nu_1 \mapsto d_1, \ldots, \nu_n \mapsto d_n \}_{\rho_i}} =\\[0.1em]
 \qquad\nu~\{ \nu_1:\semalt{d_1}, \ldots, \nu_n :\semalt{d_n}, \addopt{\theta(\rho_i)} \}
 \\[0.5em]
 \semalt{d_1, \ldots, d_n} = \sigma_n \quad\textnormal{where}\\[0.1em]
 \qquad\sigma_0 = \bot,~ \forall i\in \{ 1.. n \}.~ \sigma_{i-1} \triangledown \semalt{d_i} \vdash \sigma_i \\[0.6em]
 \textnormal{Choose minimal}\ \theta\ \textnormal{by ordering}\ \sqsubseteq\ \textnormal{lifted over substitutions} \\[0.1em]
\end{array}
\end{equation*}
\caption{Shape inference from sample data}
\label{fig:shape-inference}
\vspace{-1em}
\end{figure}


\begin{figure*}
  \noindent
  \begin{equation*}
  \begin{array}{rclcl}
   \tytag &\narrow{=}& \ident{collection}  &\narrow{\lsep}& \ident{number} \\
          &\narrow{\lsep}& \ident{nullable} &\narrow{\lsep}& \ident{string}  \\
          &\narrow{\lsep}& \nu ~\lsep \ident{any} &\narrow{\lsep}& \ident{bool}
  \end{array}
  \quad\;\;\;
  \begin{array}{rcl}
   \tytagof(\ident{string}) &\narrow{=}& \ident{string}\\
   \tytagof(\ident{bool}) &\narrow{=}& \ident{bool}\\
   \tytagof(\ident{int}) &\narrow{=}& \ident{number}\\
   \tytagof(\ident{float}) &\narrow{=}& \ident{number}\\
  \end{array}
  \;\;
  \begin{array}{rcl}
   \tytagof(\kvd{any}\langl\sigma_1, \ldots, \sigma_n\rangl) &\narrow{=}& \ident{any}\\
   \tytagof(\nu\; \{ \nu_1 : \sigma_1, \; \ldots \; , \nu_n : \sigma_n \}) &\narrow{=}& \nu \\
   \tytagof(\kvd{nullable}\langl\hat{\sigma}\rangl) &\narrow{=}& \ident{nullable}\\
   \tytagof([\sigma]) &\narrow{=}& \ident{collection}\\
  \end{array}
  \end{equation*}

  \vspace{-0.5em}

  \begin{equation*}
  \qquad\qquad
  \begin{array}{ll}
  \begin{array}{l}
    \ident{csh}(\kvd{any}\langl \sigma_1, \ldots, \sigma_k,  \ldots, \sigma_n\rangl,
      \kvd{any}\langl \sigma'_1, \ldots, \sigma'_k, \ldots, \sigma'_m\rangl) =\\
      \qquad \kvd{any}\langl \ident{csh}(\sigma_1, \sigma_1'), \ldots, \ident{csh}(\sigma_k, \sigma_k'),
        \sigma_{k+1}, \ldots, \sigma_{n}, \sigma'_{k+1}, \ldots, \sigma'_{m}\rangl \\[0.5em]
      \qquad \textnormal{For $i,j$ such that }(\tytagof(\sigma_i) = \tytagof(\sigma'_j)) \Leftrightarrow (i = j) \wedge (i \leq k)
  \end{array}\quad~
  &  (\emph{top-merge})
  \\[2.5em]
  \begin{array}{l}
    \ident{csh}(\sigma, \kvd{any}\langl\sigma_1, \ldots, \sigma_n\rangl) =
    \ident{csh}(\kvd{any}\langl\sigma_1, \ldots, \sigma_n\rangl, \sigma) = \\
    \qquad \kvd{any}\langl\sigma_1, \ldots, \dropopt{\ident{csh}(\sigma, \sigma_i)}, \ldots, \sigma_n\rangl\\[0.5em]
    \qquad \textnormal{For $i$ such that }\tytagof(\sigma_i) = \tytagof(\dropopt{\sigma})
  \end{array}
  &  (\emph{top-incl})
  \\[2.5em]
  ~\;\ident{csh}(\sigma, \kvd{any}\langl\sigma_1, \ldots, \sigma_n\rangl) =
   \kvd{any}\langl\sigma_1, \ldots, \sigma_n, \dropopt{\sigma}\rangl
   & (\emph{top-add})\\[0.5em]
  ~\;\ident{csh}(\sigma_1,\sigma_2) =
  \kvd{any}\langle\dropopt{\sigma_1}, \dropopt{\sigma_2}\rangle
  & (\emph{top-any})
 \end{array}
\end{equation*}

\caption{Extending the common preferred shape relation for labelled top shapes}
\label{fig:subtyping-cst-var}
\end{figure*}


\subsection{Adding labelled top shapes}
\label{sec:inference-vars}

When analyzing the structure of shapes, it suffices to consider a single top shape \kvd{any}.
The type providers need more information to provide typed access to the possible
alternative shapes of data, such as XML nodes.

We extend the core model (sufficient for the discussion of relative safety) with \emph{labelled top
shapes} defined as:
\begin{equation*}
\sigma = \ldots \lsep \kvd{any}\langl \sigma_1, \ldots, \sigma_n\rangl
\end{equation*}
The shapes $\sigma_1, \ldots, \sigma_n$ represent statically known shapes that appear in the
sample and that we expose in the provided type. As discussed earlier (\S\ref{sec:providers-xml})
this is important when reading external \emph{open world} data. The labels do not affect the
preferred shape relation and $\kvd{any}\langl \sigma_1, \ldots, \sigma_n\rangl$ should still be
seen as the top shape, regardless of the labels\footnote{An alternative would be to add unions of
shapes, but doing so in a way that is compatible with the open-world assumption breaks the
existence of unique lower bound of the preferred shape relation.}.

The common preferred shape function is extended to find a labelled top shape that best represents
the sample. The new rules for \kvd{any} appear in Figure~\ref{fig:subtyping-cst-var}.
We define shape \emph{tags} to identify shapes that have a common preferred shape
which is not the top shape. We use it to limit the number of labels and avoid nesting by grouping
shapes by the shape tag. Rather than inferring $\kvd{any}\langl\ident{int}, \kvd{any}\langl\ident{bool}, \ident{float}\rangl\rangl$,
our algorithm joins $\ident{int}$ and $\ident{float}$ and produces
$\kvd{any}\langl\ident{float},\ident{bool}\rangl$.

When combining two top shapes (\emph{top-merge}), we group the annotations by their tags.
When combining a top with another shape, the labels may or may not already contain a case with
the tag of the other shape. If they do, the two shapes are combined (\emph{top-incl}), otherwise
a new case is added (\emph{top-add}). Finally, (\emph{top-all}) replaces earlier (\emph{any})
and combines two distinct non-top shapes. As top shapes implicitly permit \kvd{null} values,
we make the labels non-nullable using $\dropopt{-}$.

The revised algorithm still finds a shape which is the least upper bound. This means that
labelled top shape is only inferred when there is no other alternative.

Stating properties of the labels requires refinements to the \emph{preferred shape} relation.
We leave the details to future work, but we
note that the algorithm infers the best labels in the sense that there are labels that enable
typed access to every possible value in the sample, but not more. The same is the case for nullable
fields of records.

%
%

\section{Formalizing type providers}
\label{sec:formal}

This section presents the formal model of F\# Data integration. To represent the programming
language that hosts the type provider, we introduce the Foo calculus, a subset of F\# with objects
and properties, extended with operations for working with weakly typed structured data along the
lines of the F\# Data runtime. Finally, we describe how
type providers turn inferred shapes into Foo classes (\S\ref{sec:formal-tp}).


\begin{figure}[!h]
\vspace{-0.1em}
\noindent
\begin{equation*}
\begin{array}{rcl}
 \tau &\narrow{=}& \ident{int} \lsep \ident{float} \lsep \ident{bool} \lsep \ident{string} \lsep C \lsep \ident{Data} \\[0.0em]
      &\narrow{|}& \tau_1 \rightarrow \tau_2 \lsep \ident{list}\langl\tau\rangl \lsep \ident{option}\langl\tau\rangl
\\[0.6em]
 L &\narrow{=}& \kvd{type}~C(\overline{x:\tau}) = \overline{M} \\[0.0em]
 M &\narrow{=}& \kvd{member}~N:\tau= e
\\[0.6em]
 v &\narrow{=}& d \lsep \ident{None} \lsep \ident{Some}(v) \lsep \kvd{new}~C(\overline{v}) \lsep v_1::v_2 \\[0.0em]
 e &\narrow{=}& d \lsep op \lsep e_1~e_2 \lsep \lambda x.e \lsep e.N \lsep \kvd{new}~C(\overline{e}) \\
   &\narrow{|}& \ident{None} \lsep\kvd{match}~e~\kvd{with}~\ident{Some}(x) \rightarrow e_1 \,|\, \ident{None} \rightarrow e_2 \\
   &\narrow{|}& \ident{Some}(e) \lsep e_1 = e_2 \lsep {\kvd{if}~e_1~\kvd{then}~e_2~\kvd{else}~e_3} \lsep \kvd{nil} \\
   &\narrow{|}& e_1 :: e_2 \lsep \kvd{match}~e~\kvd{with}~x_1::x_2 \rightarrow e_1 \,|\, \kvd{nil} \rightarrow e_2
\\[0.6em]
op &\narrow{=}& \ident{convFloat}(\sigma, e) \lsep \ident{convPrim}(\sigma, e) \\
   &\narrow{|}& \ident{convField}(\nu_1, \nu_2, e, e) \lsep \ident{convNull}(e_1, e_2) \\
   &\narrow{|}& \ident{convElements}(e_1, e_2) \,\lsep \ident{hasShape}(\sigma, e)
\end{array}
\end{equation*}

\caption{The syntax of the Foo calculus}
\label{fig:foo-syntax}
\vspace{-0.5em}
\end{figure}


\begin{figure*}
\noindent
\begin{center}
\textbf{Part I.} Reduction rules for conversion functions
\end{center}

\vspace{0.5em}
\noindent
\begin{equation*}
\begin{array}{l}
\ident{hasShape}(\nu~ \{ \nu_1 \!:\! \sigma_1, \ldots, \nu_n \!:\! \sigma_n \}, \nu'~\{ \nu'_1\mapsto d_1, \ldots, \nu'_m\mapsto d_m \}) \reduce (\nu = \nu') ~\wedge \\
  \quad (~ ((\nu_1 = \nu'_1) \wedge \ident{hasShape}(\sigma_1, d_1)) \vee\ldots\vee ((\nu_1 = \nu'_m) \wedge \ident{hasShape}(\sigma_1, d_m)) \vee \ldots \vee\\
  \quad ~\; ((\nu_n = \nu'_1) \wedge \ident{hasShape}(\sigma_n, d_1)) \vee\ldots\vee ((\nu_n = \nu'_m) \wedge \ident{hasShape}(\sigma_n, d_m))~)
\\[0.5em]
\ident{hasShape}([\sigma], [d_1; \ldots; d_n]) \reduce \ident{hasShape}(\sigma, d_1)\wedge\ldots\wedge\ident{hasShape}(\sigma, d_n) \\
\ident{hasShape}([\sigma], \kvd{null}) \reduce \kvd{true}
\end{array}
\quad
\begin{array}{l}
\\[-2.5em]
\ident{convFloat}(\ident{float}, i) \reduce f~(f=i) \\
\ident{convFloat}(\ident{float}, f) \reduce f \\[0.75em]
\ident{convNull}(\kvd{null}, e) \reduce \ident{None} \\
\ident{convNull}(d, e) \reduce \ident{Some}(e~d)
\end{array}
\end{equation*}
\vspace{-0.75em}
\begin{equation*}
\hspace{-1.1em}
\begin{array}{l}
\ident{hasShape}(\ident{string}, s) \reduce \kvd{true} \\
\ident{hasShape}(\ident{int}, i) \reduce \kvd{true}\\
\ident{hasShape}(\ident{bool}, d) \reduce \kvd{true} \quad(\textnormal{when}~d\in{\kvd{true},\kvd{false}} )\\
\ident{hasShape}(\ident{float}, d) \reduce \kvd{true} \quad(\textnormal{when}~d=i ~\textnormal{or}~ d=f) \\
\ident{hasShape}(\_, \_) \reduce \kvd{false} \\
\end{array}
\;\;
\begin{array}{l}
\ident{convPrim}(\sigma, d) \reduce d\quad(\sigma,d\in\{ (\ident{int},i),(\ident{string},s),(\ident{bool}, b) \})
\\[0.25em]
\ident{convField}(\nu,\nu_i, \nu~\{\ldots, \nu_i=d_i, \ldots\}, e) \reduce e~d_i\\
\ident{convField}(\nu,\nu', \nu~\{\ldots, \nu_i=d_i, \ldots\}, e) \reduce e~\kvd{null}\quad(\nexists i.\nu_i=\nu' )
\\[0.25em]
\ident{convElements}([d_1; \ldots; d_n], e) \reduce e~d_1 :: \ldots :: e~d_n :: \kvd{nil}  \\
\ident{convElements}(\kvd{null}) \reduce \kvd{nil}
\end{array}
\end{equation*}

\noindent
\begin{center}
\textbf{Part II.} Reduction rules for the rest of the Foo calculus
\end{center}

\vspace{0.5em}
\noindent
\begin{equation*}
\begin{array}{rl}
 \textnormal{\footnotesize{(member)}}&
 \hspace{-0.6em}
 \inference
 { \kvd{type}~C(\overline{x:\tau})= \kvd{member}~N_i : \tau_i = e_i \ldots \in L }
 { L, (\kvd{new}~C(\overline{v})).N_i \reduce e_i[\overline{x} \leftarrow \overline{v}] }\\
 \\[-0.2em]
 \textnormal{\footnotesize{(cond1)}}&
 \hspace{-0.4em}
 \kvd{if}~\kvd{true}~\kvd{then}~e_1~\kvd{else}~e_2 ~\reduce~ e_1 \\
 \\[-0.2em]
 \textnormal{\footnotesize{(cond2)}}&
 \hspace{-0.4em}
 \kvd{if}~\kvd{false}~\kvd{then}~e_1~\kvd{else}~e_2 ~\reduce~ e_2 \\
 \\[-0.3em]
 \textnormal{\footnotesize{(eq1)}}&
 v=v'\reduce\kvd{true} \qquad (\textnormal{when}~v = v')\\
 \\[-0.3em]
 \textnormal{\footnotesize{(eq2)}}&
 v=v'\reduce\kvd{false} \qquad (\textnormal{when}~v \neq v')\\
 \\[-0.3em]
 \textnormal{\footnotesize{(fun)}}&
 \hspace{-0.4em}
 (\lambda x.e)~v ~\reduce~ e[x\leftarrow v] \\
\end{array}
\quad
\begin{array}{rl}
 \textnormal{\footnotesize{(match1)}}&
 \hspace{-1em}
 \begin{array}{l}
  \kvd{match}~\ident{None}~\kvd{with} \\
  \ident{Some}(x) \rightarrow e_1 \,|\, \ident{None} \rightarrow e_2
 \end{array} \hspace{-0.5em} ~\reduce~ e_2 \\
 \\[-0.2em]
 \textnormal{\footnotesize{(match2)}}&
 \hspace{-1em}
 \begin{array}{l}
    \kvd{match}~\ident{Some}(v)~\kvd{with} \\
    \ident{Some}(x) \rightarrow e_1 \,|\, \ident{None} \rightarrow e_2
 \end{array} \hspace{-0.5em} ~\reduce~ e_1[x\leftarrow v]\\
 \\[-0.2em]
 \textnormal{\footnotesize{(match3)}}&
 \hspace{-1em}
 \begin{array}{l}
  \kvd{match}~\kvd{nil}~\kvd{with} \\[0em]
  x_1::x_2 \rightarrow e_1 \,|\, \kvd{nil} \rightarrow e_2
 \end{array} \hspace{-0.5em} ~\reduce~ e_2\\
 \\[-0.2em]
 \textnormal{\footnotesize{(match4)}}&
 \hspace{-1em}
 \begin{array}{l}
  \kvd{match}~v_1::v_2~\kvd{with} \\[0em]
  x_1::x_2 \rightarrow e_1 \,|\, \kvd{nil} \rightarrow e_2
 \end{array} \hspace{-0.5em} ~\reduce~ e_1[\overline{x}\leftarrow\overline{v}] \\
\\[-0.3em]
\textnormal{\footnotesize{(ctx)}}&
\hspace{-0.4em}
E[e] \reduce E[e'] \qquad\qquad(\textnormal{when}~e \reduce e')\\
\end{array}
\end{equation*}

\caption{Foo -- Reduction rules for the Foo calculus and dynamic data operations}
\label{fig:ff-reduction}
\end{figure*}


Type providers for structured data map the ``dirty'' world of weakly typed
structured data into a ``nice'' world of strong types. To model this, the Foo calculus
does not have \kvd{null} values and data values $d$ are never directly exposed.
Furthermore Foo is simply typed: despite using class types and object notation
for notational convenience, it has no subtyping.


\subsection{The Foo calculus}
\label{sec:formal-ff}

The syntax of the calculus is shown in Figure~\ref{fig:foo-syntax}.
The type \ident{Data} is the type of structural data $d$. A class definition $L$ consists of
a single constructor and zero or more parameter-less members. The declaration implicitly closes
over the constructor parameters.
Values $v$ include previously defined data $d$; expressions
$e$ include class construction, member access, usual functional constructs (functions,
lists, options) and conditionals. The $op$ constructs are discussed next.


\paragraph{Dynamic data operations.}

The Foo programs can only work with \ident{Data} values using certain primitive operations. Those are modelled
by the $op$ primitives. In F\# Data, those are internal and users never access them directly.

The behaviour of the dynamic data operations is defined by the reduction rules in
Figure~\ref{fig:ff-reduction} (Part I). The typing is shown in Figure~\ref{fig:ff-typecheck} and is
discussed later. The \ident{hasShape} function represents a runtime shape test.
It checks whether a \ident{Data} value $d$ (Section~\ref{sec:formal-inferval}) passed as the second
argument has a shape specified by the first argument. For records, we have to check that for each
field $\nu_1, \ldots, \nu_n$ in the record, the actual record value has a field of the same name
with a matching shape. The last line defines a ``catch all'' pattern, which returns \kvd{false}
for all remaining cases. We treat $e_1 \vee e_2$ and $e_1 \wedge e_2$ as a syntactic sugar for
\kvd{if}\,.\,.\,\kvd{then}\,.\,.\,\kvd{else} so the result of the reduction is just a Foo
expression.

The remaining operations convert data values into values of less preferred shape.
The \ident{convPrim} and \ident{convFloat} operations take the required shape and a data value.
When the data does not match the required type, they do not reduce. For example,
$\ident{convPrim}(\kvd{bool}, \num{42})$ represents a stuck state, but $\ident{convFloat}(\kvd{float}, \num{42})$
turns an integer \num{42} into a floating-point numerical value \num{42.0}.

The \ident{convNull}, \ident{convElements} and \ident{convField} operations take an additional
parameter $e$ which represents a function to be used in order to convert a contained value (non-null
optional value, list elements or field value); \ident{convNull} turns \kvd{null} data value into
\ident{None} and \ident{convElements} turns a data collection $[d_1, \ldots, d_n]$ into a Foo list
$v_1 :: \ldots :: v_n :: \kvd{nil}$ and a \kvd{null} value into an empty list.


\begin{figure*}
\noindent
\begin{equation*}
\begin{array}{c}
\inference
  {~}
  {\hspace{-0.5em}L; \Gamma \vdash d : \ident{Data}\hspace{-0.5em}}
~~
\inference
  {~}
  {\hspace{-0.5em}L; \Gamma \vdash i : \ident{int}\hspace{-0.5em}}
~~
\inference
  {~}
  {\hspace{-0.5em}L; \Gamma \vdash f : \ident{float}\hspace{-0.5em}}
~~
\inference
  {L; \Gamma, x : \tau_1 \vdash e : \tau_2}
  {L; \Gamma \vdash \lambda x.e : \tau_2}
~~
\inference
  {L; \Gamma \vdash e_2 : \tau_1 &
   L; \Gamma \vdash e_1 : \tau_1 \rightarrow \tau_2}
  {L; \Gamma \vdash e_1~e_2 : \tau_2}
\\[1.75em]
\inference
  {L; \Gamma \vdash e : \ident{Data}}
  {L; \Gamma \vdash \ident{hasShape}(\sigma, e) : \ident{bool}}
\quad
\inference
  { L; \Gamma \vdash e : \ident{Data} & \tau \in \{ \ident{int}, \ident{float} \} }
  { L; \Gamma \vdash \ident{convFloat}(\sigma, e) : \ident{float} }
\quad
\inference
  { L; \Gamma \vdash e_1 : \ident{Data} & L; \Gamma \vdash e_2 : \ident{Data} \rightarrow \tau }
  { L; \Gamma \vdash \ident{convNull}(e_1, e_2) : \ident{option}\langle\tau\rangle }
\\[1.75em]
\inference
  { L; \Gamma \vdash e : \ident{Data} \\
    \ident{prim}\in\{ \ident{int}, \ident{string}, \ident{bool} \} }
  { L; \Gamma \vdash \ident{convPrim}(\ident{prim}, e) : \ident{prim} }
\quad
\inference
  { L; \Gamma \vdash e_1 : \ident{Data} \\ L; \Gamma \vdash e_2 : \ident{Data} \rightarrow \tau }
  { L; \Gamma \vdash \ident{convElements}(e_1, e_2) : \ident{list}\langle\tau\rangle }
\quad
\inference
  { L; \Gamma \vdash e_1 : \ident{Data} \\ L; \Gamma \vdash e_2 : \ident{Data} \rightarrow \tau }
  { L; \Gamma \vdash \ident{convField}(\nu, \nu', e_1, e_2) : \tau }
\\[1.75em]
\inference
  {L; \Gamma \vdash e : C \\ \kvd{type}~C(\overline{x:\tau}) = ..\;\kvd{member}~N_i : \tau_i = e_i\;.. \in L}
  {L; \Gamma \vdash e.N_i:\tau_i}
\qquad
\inference
  {L; \Gamma \vdash e_i : \tau_i & \kvd{type}~C(x_1:\tau_1, \ldots, x_n:\tau_n) = \ldots \in L}
  {L; \Gamma \vdash \kvd{new}~C(e_1, \ldots, e_n):C}
\end{array}
\end{equation*}

\caption{Foo -- Fragment of type checking}
\label{fig:ff-typecheck}
\end{figure*}


\paragraph{Reduction.}
The reduction relation is of the form $L, e \reduce e'$. We omit class declarations
$L$ where implied by the context and write $e \reduce^{*} e'$ for
the reflexive, transitive closure of $\reduce$.

Figure~\ref{fig:ff-reduction} (Part II) shows the reduction rules.  The (\emph{member}) rule reduces
a member access using a class definition in the assumption. The (\emph{ctx}) rule models the eager
evaluation of F\# and performs a reduction inside a sub-expression specified by an evaluation
context $E$:
\begin{equation*}
\begin{array}{rcl}
 E &\narrow{=}& v::E \lsep v~E \lsep E.N \lsep \kvd{new}~C(\overline{v}, E, \overline{e})\\[0.1em]
   &\narrow{|}&  \kvd{if}~E~\kvd{then}~e_1~\kvd{else}~e_2  \lsep E = e \lsep v = E \\[0.1em]
   &\narrow{|}& \ident{Some}(E) \lsep op(\overline{v}, E, \overline{e})\\[0.1em]
   &\narrow{|}& \kvd{match}~E~\kvd{with}~\ident{Some}(x) \rightarrow e_1 \,|\, \ident{None} \rightarrow e_2 \\[0.1em]
   &\narrow{|}& \kvd{match}~E~\kvd{with}~x_1 :: x_2 \rightarrow e_1 \,|\, \kvd{nil} \rightarrow e_2
\end{array}
\end{equation*}

\noindent
The evaluation proceeds from left to right as denoted by $\overline{v}, E, \overline{e}$ in
constructor and dynamic data operation arguments or $v::E$ in list initialization.
We write $e[\overline{x} \leftarrow \overline{v}]$ for the result of replacing variables $\overline{x}$ by
values $\overline{v}$ in an expression. The remaining six rules
give standard reductions.


\paragraph{Type checking.}
Well-typed Foo programs reduce to a value in a finite number of steps or get stuck due to an
error condition. The stuck states can only be due to the dynamic data operations (e.g. an attempt
to convert \kvd{null} value to a number $\ident{convFloat}(\ident{float},\kvd{null})$). The relative safety (Theorem~\ref{thm:safety})
characterizes the additional conditions on input data under which Foo programs do not get stuck.

Typing rules in Figure~\ref{fig:ff-typecheck} are written using a judgement
$L; \Gamma \vdash e : \tau$ where the context also contains a set of class declarations $L$.
The fragment demonstrates the differences and similarities with Featherweight Java \cite{fwjava} and
typing rules for the dynamic data operations $op$:
\begin{itemize}
\item[--] All data values $d$ have the type \ident{Data}, but primitive data values (Booleans,
  strings, integers and floats) can be implicitly converted to Foo values and so they also have a
  primitive type as illustrated by the rule for $i$ and $f$.

\item[--] For non-primitive data values (including \kvd{null}, data collections and records),
  \ident{Data} is the only type.

\item[--] Operations $op$ accept \ident{Data} as one of the arguments and produce a non-\ident{Data}
  Foo type. Some of them require a function specifying the conversion for nested values.

\item[--] Rules for checking class construction and member access are similar to corresponding
  rules of Featherweight Java.
\end{itemize}
An important part of Featherweight Java that is omitted here is the checking of type declarations
(ensuring the members are well-typed). We consider only classes generated by our type providers
and those are well-typed by construction.


\subsection{Type providers}
\label{sec:formal-tp}

So far, we defined the type inference algorithm which produces a shape $\sigma$ from one
or more sample documents (\S\ref{sec:inference}) and we defined a simplified model of evaluation
of F\# (\S\ref{sec:formal-ff}) and F\# Data runtime (\S\ref{sec:formal-tp}). In this section, we
define how the type providers work, linking the two parts.

All F\# Data type providers take (one or more) sample documents, infer a common preferred shape $\sigma$
and then use it to generate F\# types that are exposed to the programmer.\footnote{The actual
implementation provides \emph{erased types} as described in \cite{fsharp-typeprov}. Here, we treat
the code as actually generated. This is an acceptable simplification, because F\# Data type providers
do not rely on laziness or erasure of type provision.}

\paragraph{Type provider mapping.}
A type provider produces an F\# type $\tau$ together with a Foo expression and a collection of
class definitions. We express it using the following mapping:
\begin{equation*}
\sem{\sigma} = (\tau, e, L) \qquad (\textnormal{where}~L,\emptyset \vdash e:\ident{Data}\rightarrow\tau)
\end{equation*}


\noindent
The mapping $\sem{\sigma}$ takes an inferred shape $\sigma$. It returns an F\# type $\tau$ and
a function that turns the input data (value of type \ident{Data}) into a Foo value of type $\tau$.
The type provider also generates class definitions that may be used by $e$.

Figure~\ref{fig:tp-generation} defines $\sem{-}$. Primitive types are handled by a single rule that
inserts an appropriate conversion function; \ident{convPrim} just checks that the shape matches
and \ident{convFloat} converts numbers to a floating-point.


\begin{figure*}
\begin{multicols}{2}

\noindent
\begin{equation*}
\begin{array}{l}
 \sem{\sigma_p} = \tau_p,\lambda x. op(\sigma_p, x),\emptyset\quad\textnormal{where} \\[0.6em]
 \quad\sigma_p, \tau_p, op\in  \{~ (\ident{bool}, \ident{bool}, \ident{convPrim})\\
 \hspace{2.9em} (\ident{int}, \ident{int}, \ident{convPrim}), (\ident{float},\ident{float},\ident{convFloat}),\\
 \hspace{2.9em} (\ident{string},\ident{string},\ident{convPrim}) ~\}
\end{array}
\end{equation*}
\vspace{-2em}

\begin{equation*}
\begin{array}{l}
 \sem{\,\nu~ \{ \nu_1 : \sigma_1, \ldots, \nu_n : \sigma_n \}\,} = \\[0.1em]
 \quad C, \lambda x. \kvd{new}~C(x), L_1\cup\ldots\cup L_n\cup\{ L \}\quad\textnormal{where}\\[0.6em]
 \qquad \;\;C~\textnormal{is a fresh class name} \\
 \qquad \;\,\,L = \kvd{type}~C(x_1\!:\!\ident{Data})=M_1 \ldots M_n \\
 \qquad M_i = \kvd{member}~\nu_i:\tau_i=\ident{convField}(\nu, \nu_i, x_1, e_i),\\
 \qquad \tau_i, e_i, L_i = \sem{\sigma_i}
\end{array}
\end{equation*}
\vspace{-2em}

\begin{equation*}
\begin{array}{l}
 \sem{\,[\sigma]\,} = \ident{list}\langl\tau\rangl, \lambda x . \ident{convElements}(x, e'), L \;\;\textnormal{where}\\[0.4em]
 \qquad \tau, e', L = \sem{\hat{\sigma}}
\end{array}
\end{equation*}

\noindent
\begin{equation*}
\begin{array}{l}
 \sem{\,\kvd{any}\langl\sigma_1, \ldots, \sigma_n\rangl\,} = \\
 \quad C, \lambda x.\kvd{new}~C(x), L_1\cup\ldots\cup L_n\cup\{ L \}\;\;\textnormal{where}\\[0.6em]
 \qquad \;\;C~\textnormal{is a fresh class name} \\
 \qquad \;\,\,L = \kvd{type}~C(x:\ident{Data})~=~M_1\ldots M_n \\
 \qquad M_i = \kvd{member}~\nu_i:\ident{option}\langl\tau_i\rangl=\\
 \hspace{5.8em}  \kvd{if}~\ident{hasShape}(\sigma_i, x)~\kvd{then}~ \ident{Some}(e_i~x)~\kvd{else}~\ident{None} \\[0.1em]
 \qquad \tau_i, e_i, L_i = \sem{\sigma_i}_e,\;\; \nu_i=\tytagof{(\sigma_i)}
\end{array}
\end{equation*}
\vspace{-2em}

\begin{equation*}
\hspace{-1.3em}
\begin{array}{l}
 \sem{\kvd{nullable}\langl\hat{\sigma}\rangl} = \\[0.2em]
 \qquad \ident{option}\langl\tau\rangl, \lambda x . \ident{convNull}(x,e), L\\[0.2em]
 \qquad \textnormal{where}~\tau, e, L = \sem{\hat{\sigma}}
\end{array}
\end{equation*}
\vspace{-2em}

\begin{equation*}
\hspace{-1.3em}
\begin{array}{l}
 \sem{\bot} = \sem{\kvd{null}} = C, \lambda x. \kvd{new}~C(x), \{ L \} \;\;\textnormal{where}\\[0.4em]
 \qquad C~\textnormal{is a fresh class name} \\[0.1em]
 \qquad L = \kvd{type}~C(v:\ident{Data})
\end{array}
\end{equation*}
\end{multicols}

\caption{Type provider -- generation of Foo types from inferred structural types}
\label{fig:tp-generation}
\vspace{-0.5em}
\end{figure*}


For records, we generate a class $C$ that takes a data value as a constructor parameter. For each
field, we generate a member with the same name as the field. The body of the member calls
\ident{convField} with a function obtained from $\sem{\sigma_i}$. This function turns the field
value (data of shape $\sigma_i$) into a Foo value of type $\tau_i$. The returned expression creates a new instance of
$C$ and the mapping returns the class $C$ together with all recursively generated classes. Note that
the class name $C$ is not directly accessed by the user and so we can use an arbitrary name, although the
actual implementation in F\# Data attempts to infer a reasonable name.\footnote{For example, in
\ident{\{\str{person}:\{\str{name}:\str{Tomas}\}\}}, the nested record will be named \ident{Person}
based on the name of the parent record field.}

A collection shape becomes a Foo $\ident{list}\langl\tau\rangl$. The returned expression calls \ident{convElements}
(which returns the empty list for data value \kvd{null}). The last parameter is the recursively obtained
conversion function for the shape of elements $\sigma$. The handling of the nullable shape is similar,
but uses \ident{convNull}.

As discussed earlier, labelled top shapes are also generated as Foo classes with properties. Given
$\kvd{any}\langl\sigma_1, \ldots, \sigma_n\rangl$, we get corresponding F\# types $\tau_i$ and generate
$n$ members of type $\ident{option}\langl \tau_i\rangl$. When the member is accessed, we need to perform
a runtime shape test using \ident{hasShape} to ensure that the value has the right shape (similarly to runtime
type conversions from the top type in languages like Java). If the shape matches, a \ident{Some} value is
returned. The shape inference algorithm also guarantees that there is only one case for each shape tag
(\S\ref{sec:inference-commonsuper}) and so we can use the tag for the name of the generated member.

\paragraph{Example 1.}
To illustrate how the mechanism works, we consider two examples. First, assume
that the inferred shape is a record
$\ident{Person}~\{~\ident{Age}\!:\!\kvd{option}\langl\ident{int}\rangl,~\ident{Name}\!:\!\ident{string}~ \}$.
The rules from Figure~\ref{fig:tp-generation} produce the \ident{Person} class shown below
with two members.

The body of the \ident{Age} member uses \ident{convField} as specified by the case for optional
record fields. The field shape is nullable and so \ident{convNull} is used in the continuation to
convert the value to \ident{None} if \ident{convField} produces a \kvd{null} data value and
\ident{hasShape} is used to ensure that the field has the correct shape. The \ident{Name} value should
be always available and should have the right shape so \ident{convPrim} appears directly in the
continuation. This is where the evaluation can get stuck if the field value was missing:
\begin{equation*}
\begin{array}{l}
 \kvd{type}~\ident{Person}(x_1:\ident{Data})~= \\[0.1em]
 \quad \kvd{member}~\ident{Age}~:~\ident{option}\langl\ident{int}\rangl~= \\[0.1em]
 \qquad \ident{convField}(\ident{Person}, \ident{Age}, x_1, \lambda x_2 \rightarrow \\[0.1em]
 \qquad \quad \ident{convNull}(x_2, \lambda x_3\rightarrow\ident{convPrim}(\ident{int}, x_3))~) \\[0.1em]
 \quad \kvd{member}~\ident{Name}~:~\ident{string}~= \\[0.1em]
 \qquad \ident{convField}(\ident{Person},\ident{Name}, x_1, \lambda x_2 \rightarrow \\[0.1em]
 \qquad \quad \ident{convPrim}(\ident{string}, x_2)))
\end{array}
\end{equation*}
The function to create the Foo value \ident{Person} from a data value
is $\lambda x . \kvd{new}~\ident{Person}(x)$.

\paragraph{Example 2.} The second example illustrates the handling of collections and
labelled top types. Reusing \ident{Person} from the previous example, consider
$[\kvd{any}\langl\ident{Person}~\{ \ldots \},\ident{string}\rangl]$:
\begin{equation*}
\begin{array}{l}
 \kvd{type}~\ident{PersonOrString}(x:\ident{Data})~= \\[0.1em]
 \quad \kvd{member}~\ident{Person}~:~\ident{option}\langl\ident{Person}\rangl~= \\[0.1em]
 \qquad \kvd{if}~\ident{hasShape}(\ident{Person}~\{ \ldots \}, x)~\kvd{then}\\[0.1em]
 \qquad\quad \ident{Some}(\kvd{new}~\ident{Person}(x))~\kvd{else}~\ident{None} \\[0.1em]
 \quad \kvd{member}~\ident{String}~:~\ident{option}\langl\ident{string}\rangl~= \\[0.1em]
 \qquad \kvd{if}~\ident{hasShape}(\ident{string}, x)~\kvd{then}\\[0.1em]
 \qquad\quad \ident{Some}(\ident{convPrim}(\ident{string}, x))~\kvd{else}~\ident{None}
\end{array}
\end{equation*}
The type provider maps the collection of labelled top shapes to a type $\ident{list}\langl\ident{PersonOrString}\rangl$
and returns a function that parses a data value as follows:

\noindent
\begin{equation*}
\lambda x_1\rightarrow \ident{convElements}(x_1\lambda x_2\rightarrow\kvd{new}~\ident{PersonOrString}(x_2))
\end{equation*}
The \ident{PersonOrString} class contains one member for each of the labels. In the body, they
check that the input data value has the correct shape using \ident{hasShape}. This also implicitly
handles \kvd{null} by returning \kvd{false}. As discussed earlier, labelled top types provide easy
access to the known cases (\ident{string} or \ident{Person}), but they require a runtime shape check.

%
%

\section{Relative type safety}
\label{sec:safety}

Informally, the safety property for structural type providers states that, given representative sample
documents, any code that can be written using the provided types is guaranteed to work. We call this
\emph{relative safety}, because we cannot avoid \emph{all} errors. In particular, one can always
provide an input that has a different structure than any of the samples. In this case, it is expected
that the code will throw an exception in the implementation (or get stuck in our model).

More formally, given a set of sample documents, code using the provided type is guaranteed to work if
the inferred shape of the input is preferred with respect to the shape of any of the samples. Going back to
\S\ref{sec:inference-subtyping}, this means that:
\begin{itemize}
\item[--] Input can contain smaller numerical values (e.g., if a sample contains float, the input can contain an integer).
\item[--] Records in the input can have additional fields.
\item[--] Records in the input can have fewer fields than some of the records in the sample
  document, provided that the sample also contains records that do not have the field.
\item[--] When a labelled top type is inferred from the sample, the actual input can also contain any other value,
  which implements the open world assumption.
\end{itemize}
The following lemma states that the provided code (generated in Figure~\ref{fig:tp-generation})
works correctly on an input $d'$ that is a subshape of $d$. More formally, the provided
expression (with input $d'$) can be reduced to a value and, if it is a class,
all its members can also be reduced to values.

\begin{lemma}[Correctness of provided types]
\label{thm:tp-correctness}
Given sample data $d$ and an input data value $d'$ such that $\semalt{d'} \sqsubseteq \semalt{d}$
and provided type, expression and classes $\tau, e, L = \sem{\semalt{d}}$,
then $L, e~d' \reduce^{*} v$ and if $\tau$ is a class ($\tau=C$) then for all members $N_i$ of the
class $C$, it holds that $L, (e~d').N_i \reduce^{*} v$.
\end{lemma}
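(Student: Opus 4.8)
The plan is to proceed by structural induction on the shape $\semalt{d}$, or equivalently on the derivation of $\sem{\semalt{d}} = (\tau, e, L)$, since the cases of Figure~\ref{fig:tp-generation} are driven by the structure of the shape. In each case I must show two things: (i) $L, e~d' \reduce^{*} v$ for some value $v$, and (ii) if $\tau = C$, then every member $N_i$ of $C$ reduces to a value when applied to $e~d'$. The key structural fact I will use repeatedly is an inversion property of the preference relation: if $\semalt{d'} \sqsubseteq \semalt{d}$, then the shape of $d'$ is constrained in a way that matches exactly the cases enumerated in \S\ref{sec:safety} --- e.g.\ if $\semalt{d} = \ident{float}$ then $\semalt{d'} \in \{\ident{int}, \ident{float}\}$, so $d'$ is $i$ or $f$ and $\ident{convFloat}(\ident{float}, d')$ reduces; if $\semalt{d} = \nu\{\ldots\}$ then $d'$ is a record $\nu'\{\ldots\}$ with $\nu = \nu'$ whose fields include at least the named fields $\nu_1, \ldots, \nu_n$ with preferred shapes; if $\semalt{d} = \kvd{nullable}\langl\hat\sigma\rangl$ then $d'$ is \kvd{null} or has shape $\sqsubseteq \hat\sigma$; and if $\semalt{d} = \kvd{any}\langl\ldots\rangl$ then $d'$ is arbitrary data, which is fine because the generated members only test \ident{hasShape} and never get stuck.

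First I would record the needed inversion lemma for $\sqsubseteq$ on ground shapes (reading off the rules \eqref{eq:sub-prim}--\eqref{eq:sub-record2} and their transitive-reflexive closure), together with the complementary fact that $\semalt{-}$ on data always produces a shape of a predictable syntactic form (a primitive, a record $\nu\{\ldots\}$, a collection, or \kvd{null}; never a bare \kvd{nullable} or \kvd{any}). Then I would do the case analysis on $\semalt{d}$. The primitive cases are immediate from the reduction rules for $\ident{convPrim}$ and $\ident{convFloat}$ in Figure~\ref{fig:ff-reduction}, Part I. For the record case, $e~d' = (\lambda x.\kvd{new}~C(x))~d' \reduce \kvd{new}~C(d')$, which is a value; for part (ii), each member $N_i$ reduces via (\emph{member}) to $\ident{convField}(\nu, \nu_i, d', e_i)$, and since $d' = \nu\{\ldots, \nu_i = d'_i, \ldots\}$ (the field is present because $\semalt{d'} \sqsubseteq \semalt{d}$ forces it, using rule \eqref{eq:sub-record2} --- the sample cannot have more named fields than it promises), $\ident{convField}$ reduces to $e_i~d'_i$, and we invoke the induction hypothesis on $\sigma_i$ with $\semalt{d'_i} \sqsubseteq \sigma_i$. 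Wait --- here I must be careful: $\sem{\semalt{d}}$ is applied to the \emph{inferred} shape of the sample, whose record fields carry the $\addopt{-}$-closed row-variable contributions, so I actually need $\semalt{d'}$ compared against the inferred shape, and I should phrase the induction hypothesis at that level throughout. The collection and nullable cases similarly unfold $\ident{convElements}$ / $\ident{convNull}$ and recurse; \kvd{null} and $\bot$ give the trivial class with no members.

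The labelled-top case $\kvd{any}\langl\sigma_1, \ldots, \sigma_n\rangl$ is the one that needs the open-world observation: $e~d' \reduce \kvd{new}~C(d')$ unconditionally, and each member $N_i$ reduces to $\kvd{if}~\ident{hasShape}(\sigma_i, d')~\kvd{then}~\ident{Some}(e_i~d')~\kvd{else}~\ident{None}$. Here $\ident{hasShape}$ is total --- the ``catch all'' line reduces it to \kvd{false} --- so the conditional always reduces; the only subtlety is the \kvd{then} branch, where I need $e_i~d' \reduce^{*} v$. This does \emph{not} follow from the outer hypothesis $\semalt{d'} \sqsubseteq \kvd{any}\langl\ldots\rangl$ (which is vacuous), so I need a separate sublemma: if $\ident{hasShape}(\sigma_i, d') \reduce^{*} \kvd{true}$ then $\semalt{d'} \sqsubseteq \sigma_i$ (or at least $\sigma_i$ is a valid target for $d'$ in the sense required by the conversion functions). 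Establishing this correspondence between the runtime test $\ident{hasShape}$ and the static preference relation $\sqsubseteq$ --- and threading the row-variable/$\addopt{-}$ bookkeeping correctly so that the record case's preconditions for rule (\emph{recd}) are genuinely met --- is the main obstacle; the rest is routine unfolding of reductions. I expect the $\ident{hasShape}$/$\sqsubseteq$ bridge to be provable by a straightforward induction on $\sigma_i$ against the definition of $\ident{hasShape}$ in Figure~\ref{fig:ff-reduction}, but it is the step that carries the real content of ``relative'' safety and so deserves to be stated as its own lemma.
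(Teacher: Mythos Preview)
Your proposal is correct and follows the same structural-induction approach as the paper, whose proof is only a three-sentence sketch (induction on $\sem{-}$; conversion functions accept all subshapes; \ident{hasShape} handles the labelled-top case at runtime). You have in fact been more careful than the paper by isolating the \ident{hasShape}/$\sqsubseteq$ correspondence as an explicit sublemma --- the paper simply asserts that ``the \ident{hasShape} operation is used to guarantee the correct shape at runtime'' without stating or proving that bridge, so your decomposition is strictly sharper than the original.
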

\begin{proof}
By induction over the structure of $\sem{-}$. For primitives, the conversion functions accept all subshapes.
For other cases, analyze the provided code to see that it can work on all subshapes (for example~\ident{convElements}
works on \kvd{null} values, \ident{convFloat} accepts an integer). Finally, for labelled top types,
the \ident{hasShape} operation is used to guaranteed the correct shape at runtime.
\end{proof}

\noindent
This shows that provided types are correct with respect to the preferred shape relation.
Our key theorem states that, for any input which is a subshape the inferred shape and
any expression $e$, a well-typed program that uses the provided types does not ``go wrong''.
Using standard syntactic type safety  \cite{syntactic}, we prove type preservation
(reduction does not change type) and progress (an expression can be reduced).

\begin{theorem}[Relative safety]
\label{thm:safety}
Assume $d_1, \ldots, d_n$ are samples, $\sigma=\semalt{d_1, \ldots, d_n}$ is an inferred
shape and $\tau,e,L = \sem{\sigma}$ are a type, expression and class definitions generated by a
type provider.

For all inputs $d'$ such that $\semalt{d'} \sqsubseteq \sigma$ and all expressions $e'$
(representing the user code) such that $e'$ does not contain any of the dynamic data operations $op$
and any \ident{Data} values as sub-expressions and $L; y\!:\!\tau \vdash e'\!:\!\tau'$, it is
the case that $L, e[y\leftarrow e'~d'] \reduce^{*} v$ for some value $v$ and
also $\emptyset; \vdash v : \tau'$.
\end{theorem}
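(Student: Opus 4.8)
\begin{proofoutline}
The plan is to run the standard recipe for syntactic type safety~\cite{syntactic} --- a substitution lemma, type preservation, and progress --- adapted to the fact that the dynamic data operations $op$ have \emph{partial} reduction rules, so that genuinely stuck configurations do exist; the content of the theorem is precisely that none is reachable from the assembled program when $\semalt{d'} \sqsubseteq \sigma$. Two routine ingredients come first: (i) a \emph{substitution lemma}, if $L; \Gamma, y\!:\!\tau \vdash e' : \tau'$ and $L; \Gamma \vdash u : \tau$ then $L; \Gamma \vdash e'[y \leftarrow u] : \tau'$, by induction on the typing of $e'$; and (ii) \emph{preservation}, if $L; \emptyset \vdash e'' : \tau''$ and $L, e'' \reduce e'''$ then $L; \emptyset \vdash e''' : \tau''$, by induction on the reduction. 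The interesting preservation cases are the $op$ rules of Figure~\ref{fig:ff-reduction}, Part~I --- $\ident{convFloat}$ always yields $\ident{float}$, $\ident{convNull}$ an $\ident{option}\langl\tau\rangl$, $\ident{hasShape}$ yields $\ident{bool}$, and $\ident{convField}$, $\ident{convElements}$ re-derive their declared result types from the typing of their function argument (Figure~\ref{fig:ff-typecheck}) --- together with the (\emph{member}) rule, which uses the fact that type-provider-generated classes are well typed by construction, so the unfolded body $e_i[\overline{x}\leftarrow\overline{v}]$ has the member's declared type.

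Write $e^{\star} = e'[y\leftarrow e~d']$ for the assembled program, i.e.\ the user code with its free variable instantiated by the conversion $e~d'$ of the input; note $L; \emptyset \vdash e~d' : \tau$ since $L; \emptyset \vdash e : \ident{Data}\rightarrow\tau$ and $L; \emptyset \vdash d' : \ident{Data}$. I would state progress with an escape hatch: for closed well-typed $e''$, either $e''$ is a value, or $e'' \reduce e'''$, or $e'' = E[op(\overline{v})]$ for an evaluation context $E$ and an $op$-redex $op(\overline{v})$ matching none of the rules of Figure~\ref{fig:ff-reduction}, Part~I --- the usual progress induction, the third case collecting exactly the stuck states. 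It then remains to show that reducing $e^{\star}$ never hits that third case and ends in a value. Here Lemma~\ref{thm:tp-correctness} does the work: applied to $\sigma$ and $d'$ --- its proof is an induction on $\sem{-}$, not on any single sample, so it applies verbatim with $\sigma = \semalt{d_1,\ldots,d_n}$ --- it gives $L, e~d' \reduce^{*} v_0$ and, when $\tau = C$, $L, v_0.N_i \reduce^{*}$ a value for every member $N_i$ of $C$.

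Since $e'$ contains no $op$ and no \ident{Data} subexpressions, every $op$- or \ident{Data}-occurrence in $e^{\star}$ lies inside a copy of $e~d'$. I would finish by a subsidiary induction on the structure of the clean expression $e'$, carrying the invariant that every variable in scope is bound to a \emph{good} value --- a closed value all of whose transitively reachable member accesses, list elements and option contents reduce to good values --- with $y$ bound to the $v_0$ produced by $e~d'$. The base case $e' = y$ is precisely a hereditary reading of Lemma~\ref{thm:tp-correctness}, which its induction over $\sem{-}$ supplies since the recursive calls $\sem{\sigma_i}$ already cover the field and element sub-shapes; the inductive cases are the ordinary call-by-value reductions for application, $\lambda$, \kvd{match}, conditionals, list and option constructors, and --- crucially --- member access, which on a good value reduces to a good value by the invariant. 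No step along this reduction exposes a stuck $op$-redex, so $e^{\star} \reduce^{*} v$ for some value $v$; and by the substitution lemma $L; \emptyset \vdash e^{\star} : \tau'$, so by preservation along the reduction $\emptyset; \vdash v : \tau'$, which is the claim.

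The step I expect to be the main obstacle is the seam between the ``clean'' user code and the provider-generated code: ordinary progress for the $op$-free, \ident{Data}-free fragment no longer applies once $e~d'$ --- which is saturated with $op$ calls --- has been substituted in, so one must certify that every $op$-redex arising during evaluation descends from a provider subterm of exactly the form Lemma~\ref{thm:tp-correctness} tames; the ``good value'' invariant (or an equivalent bookkeeping device) is what pins this down. A second, minor point is that Lemma~\ref{thm:tp-correctness} is stated for $\sem{\semalt{d}}$ with a single sample whereas $\sem{-}$ is here applied to the multi-sample shape $\semalt{d_1,\ldots,d_n}$; this is harmless because the lemma uses only $\semalt{d'} \sqsubseteq \sigma$ and inducts on $\sem{-}$, but it should be noted.
\end{proofoutline}
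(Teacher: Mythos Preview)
Your proposal is correct and follows essentially the same decomposition as the paper: split into preservation (Lemma~\ref{thm:rs-preservation}) and progress (Lemma~\ref{thm:rs-progress}), with Lemma~\ref{thm:tp-correctness} carrying the weight at the seam between user code and provider-generated code. The paper's own argument is much terser---its progress lemma is a one-line induction on the typing derivation that appeals to Lemma~\ref{thm:tp-correctness} at member access---so your ``good value'' invariant and escape-hatch progress are a more explicit rendering of bookkeeping the paper leaves implicit, and your observation about the single-sample phrasing of Lemma~\ref{thm:tp-correctness} is accurate and harmless for exactly the reason you give.
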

\begin{proof}
We discuss the two parts of the proof separately as type preservation (Lemma~\ref{thm:rs-preservation})
and progress (Lemma~\ref{thm:rs-progress}).
\end{proof}

\begin{lemma}[Preservation]
\label{thm:rs-preservation}
Given the $\tau, e, L$ generated by a type provider as specified in
the assumptions of Theorem~\ref{thm:safety}, then if $L, \Gamma \vdash e : \tau$ and
$L, e \reduce^{*} e'$ then $\Gamma \vdash e' : \tau$.
\end{lemma}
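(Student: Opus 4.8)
The plan is to prove the standard single-step subject-reduction statement and then close it under the reflexive-transitive closure by a trivial induction on the length of the reduction sequence. It is cleanest to prove the general fact --- if $L;\Gamma\vdash e:\tau$ and $L,e\reduce e'$ then $L;\Gamma\vdash e':\tau$ for \emph{every} well-typed Foo expression $e$ --- rather than something tied to the syntactic form produced by $\sem{-}$; the lemma is then an instance, since the generated $e$ is well-typed (Figure~\ref{fig:tp-generation}) and every class in $L$ is well-typed by construction. This last point matters because Figure~\ref{fig:ff-typecheck} deliberately omits the checking of class declarations, so the (\emph{member}) case of preservation must appeal to well-typedness of $L$ as an external hypothesis, which Theorem~\ref{thm:safety} supplies.

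First I would establish the usual auxiliary lemmas: (i) weakening for the typing judgement; (ii) a substitution lemma in both single-variable and simultaneous (vector) form, stating that if $L;\Gamma,\overline{x:\tau'}\vdash e:\tau$ and $L;\Gamma\vdash\overline{v}:\overline{\tau'}$ then $L;\Gamma\vdash e[\overline{x}\leftarrow\overline{v}]:\tau$; and (iii) a \emph{replacement} lemma for evaluation contexts: whenever $L;\Gamma\vdash E[e]:\tau$ there is a type $\tau_0$ with $L;\Gamma\vdash e:\tau_0$, and for any $e'$ with $L;\Gamma\vdash e':\tau_0$ we have $L;\Gamma\vdash E[e']:\tau$. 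Lemma~(iii) is proved by induction on the structure of $E$ using inversion on the typing rules, and together with the induction hypothesis it discharges the (\emph{ctx}) rule.

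Then I would do a case analysis on the rule used in $L,e\reduce e'$ (with (\emph{ctx}) handled by the induction hypothesis via Lemma~(iii)). The rules in Part~II of Figure~\ref{fig:ff-reduction} are routine: (\emph{member}) follows from the simultaneous substitution lemma together with well-typedness of the member body $e_i$ at $\tau_i$ under $\overline{x:\tau}$; (\emph{fun}), (\emph{match2}), (\emph{match4}) follow from the substitution lemma after inverting the typing of $\lambda x.e$, $\ident{Some}(v)$ and $v_1::v_2$; (\emph{cond1}), (\emph{cond2}), (\emph{match1}), (\emph{match3}) simply keep a branch that already has type $\tau$; (\emph{eq1}), (\emph{eq2}) replace a comparison by $\kvd{true}$ or $\kvd{false}$, both of type $\ident{bool}$. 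For Part~I I would walk through each dynamic-data operation and check that the right-hand side is typeable at the type the left-hand side carries in Figure~\ref{fig:ff-typecheck}. The points to watch are: every contained data value (a record field, a list element, the argument of $\ident{convNull}$) still has type $\ident{Data}$, so feeding it to the supplied conversion function $e:\ident{Data}\rightarrow\tau$ is well-typed and yields $\tau$; $\ident{convElements}$ on $\kvd{null}$ yields $\kvd{nil}$, which has any list type, and $\ident{convNull}$ on $\kvd{null}$ yields $\ident{None}$, which has any option type; $\ident{convFloat}(\ident{float},i)\reduce f$ is fine since $f$ has type $\ident{float}$; $\ident{convPrim}(\sigma,d)\reduce d$ is fine precisely because the rule only fires when $(\sigma,d)$ is one of $(\ident{int},i)$, $(\ident{string},s)$, $(\ident{bool},b)$, where a primitive literal can be retyped at its primitive type by the implicit-conversion rules of Figure~\ref{fig:ff-typecheck}; and the $\ident{hasShape}$ rules reduce to a Boolean combination of further $\ident{hasShape}$ calls, so reading $\vee,\wedge$ as the indicated \kvd{if} sugar and using that each recursive call is applied to a $\ident{Data}$ value shows the reduct has type $\ident{bool}$.

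The main obstacle is not any individual case but keeping the invariant tight across all of them: preservation is invoked on arbitrary reducts, which are no longer in the restricted form produced by $\sem{-}$, so the argument must go through for \emph{general} well-typed expressions, the generated-code assumption being used only to justify that $L$ type-checks (needed in the (\emph{member}) case). A secondary source of friction is the double life of primitive data values --- both $\ident{Data}$ and their primitive type --- which forces an explicit appeal to the implicit-conversion rules in the $\ident{convPrim}$, $\ident{convFloat}$ and $\ident{hasShape}$ cases. Once those two points are set up, every remaining case is a short inversion-plus-retyping argument. Note that stuck-but-well-typed terms such as $\ident{convPrim}(\ident{bool},\num{42})$ cause no trouble here: ruling them out is the job of progress (Lemma~\ref{thm:rs-progress}) under the subshape hypothesis, not of preservation.
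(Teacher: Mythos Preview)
Your proposal is correct and follows the same overall strategy as the paper: induction over $\reduce$, with the standard ML/object cases handled routinely and the (\emph{member}) case discharged by appealing to well-typedness of the provider-generated class bodies. The paper's proof is only a two-sentence sketch, so your version is considerably more detailed---in particular, you make explicit the substitution and evaluation-context replacement lemmas, and you work through the Part~I dynamic-data operations case by case, none of which the paper spells out---but the underlying decomposition is the same.
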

\begin{proof}
By induction over $\reduce$. The cases for the ML subset of Foo
are standard. For (\emph{member}), we check that code generated by type providers
in Figure~\ref{fig:tp-generation} is well-typed.
\end{proof}

\noindent
The progress lemma states that evaluation of a well-typed program does not reach an undefined state.
This is not a problem for the Standard ML \cite{sml} subset and object-oriented subset \cite{fwjava} of the calculus. The
problematic part are the dynamic data operations (Figure~\ref{fig:ff-reduction}, Part I). Given a data
value (of type \ident{Data}), the reduction can get stuck if the value does not have a structure
required by a specific operation.

The Lemma~\ref{thm:tp-correctness} guarantees that this does not happen inside the provided type.
We carefully state that we only consider expressions $e'$ which
``[do] not contain primitive operations $op$ as sub-expressions''. This ensure that only
the code generated by a type provider works directly with data values.

\begin{lemma}[Progress]
\label{thm:rs-progress}
Given the assumptions and definitions from Theorem~\ref{thm:safety}, there exists $e''$ such that
$e'[y\leftarrow e~d'] \reduce e''$.
\end{lemma}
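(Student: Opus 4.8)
The plan is to carry out the standard progress argument for a simply-typed calculus and to isolate the one non-routine obligation, namely the dynamic data operations $op$: their reduction rules in Figure~\ref{fig:ff-reduction} (Part~I) are partial, so a priori an $op$ could be stuck on a data value of the wrong shape. I would first record the usual canonical-forms facts by inspecting Figure~\ref{fig:ff-typecheck}: a closed value of a primitive type is the matching literal, one of type $C$ has the form $\kvd{new}~C(\overline v)$, one of type $\ident{list}\langle\tau\rangle$ is $\kvd{nil}$ or $v_1::v_2$, one of type $\ident{option}\langle\tau\rangle$ is $\ident{None}$ or $\ident{Some}(v)$, one of type $\tau_1\rightarrow\tau_2$ is a $\lambda$-abstraction, and --- the case that matters here --- a closed value of type $\ident{Data}$ is a data value $d$. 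With these facts and the evaluation contexts $E$, the ML subset and the Featherweight Java subset of Foo (application, conditionals, equality, option/list matching, $\kvd{new}$, and member access via (\emph{member})) are dispatched by the textbook case analysis.

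The substantive step is to show that no configuration reachable from $e'[y\leftarrow e~d']$ is a stuck $op$-application. Here I would use the hypotheses of Theorem~\ref{thm:safety}: $e'$ contains no $op$ and no $\ident{Data}$ value as a sub-expression, and $\semalt{d'}\sqsubseteq\sigma$. The key observation is that every occurrence of an $op$ in $e'[y\leftarrow e~d']$, and in every term reachable from it, sits inside a residual of provided code --- inside $e~d'$, or inside the body of a member of a provided class unfolded by (\emph{member}) on a provided value. I would make this precise via a reduction invariant: every reachable term decomposes as $\mathcal{C}[\,p_1,\ldots,p_k\,]$ where $\mathcal{C}$ is $op$-free and $\ident{Data}$-free (a residual of $e'$ plus the $\lambda$-skeletons and list/option constructors emitted by $\sem{-}$) and each $p_j$ is either a partially reduced provided expression $e_j~d_j$ with $\semalt{d_j}\sqsubseteq\sigma_j$ for the shape $\sigma_j$ from which it was generated, or a value $\kvd{new}~C(d_j)$ returned by such an expression with its data argument at a subshape of the originating shape, or a ground value. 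This holds initially because $\semalt{d'}\sqsubseteq\sigma$, and I would check it is preserved by $\reduce$ using Lemma~\ref{thm:tp-correctness}: each $p_j$ reduces within itself to a value, member accesses on the $\kvd{new}~C(d_j)$ values re-expose provided sub-terms at the appropriate sub-shapes (exactly the members-of-generated-classes clause of Lemma~\ref{thm:tp-correctness}, applied recursively), and reductions in $\mathcal{C}$ only rearrange or discard the $p_j$. Given the invariant, progress is immediate: if the head redex of a reachable non-value is an $op$, it lies in some $p_j$ with its $\ident{Data}$ arguments already driven to data values that are subshapes of the expected shapes, so Lemma~\ref{thm:tp-correctness} --- whose proof checks that $\ident{convPrim}$, $\ident{convFloat}$, $\ident{convField}$, $\ident{convNull}$, $\ident{convElements}$ and $\ident{hasShape}$ each reduce on every subshape of their intended shape --- makes the step fire; otherwise the redex is in the ML/FJ core and reduces as above.

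The main obstacle I expect is the shape half of this invariant: verifying that as reduction peels the input apart through $\ident{convField}$, $\ident{convElements}$ and $\ident{convNull}$, the sub-values handed to the recursively generated conversions remain within the subshapes $\sigma_i$ of $\sigma$. This is where the covariance and record clauses of $\sqsubseteq$ (\S\ref{sec:inference-subtyping}) and the structure of $\sem{-}$ in Figure~\ref{fig:tp-generation} are needed, and it is precisely the content already packaged by Lemma~\ref{thm:tp-correctness}, so in the write-up most of this paragraph reduces to an appeal to that lemma while the remainder is bookkeeping over the decomposition $\mathcal{C}[\,\ldots\,]$. Combining this progress statement with Preservation (Lemma~\ref{thm:rs-preservation}) and termination of the $op$-free core then gives Theorem~\ref{thm:safety}.
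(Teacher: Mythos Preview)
Your proposal is correct and follows essentially the same route as the paper: induction over the typing derivation, with the ML/Featherweight-Java core handled in the standard way and the potentially stuck $op$ cases discharged by appeal to Lemma~\ref{thm:tp-correctness}. The paper's proof is a two-sentence sketch (``cases for the ML subset are standard; for member access, we rely on Lemma~\ref{thm:tp-correctness}''), whereas you make explicit the reduction invariant $\mathcal{C}[p_1,\ldots,p_k]$ that tracks which sub-terms are residuals of provided code at appropriate subshapes --- this is exactly the bookkeeping the paper elides but which is needed to iterate one-step progress through Preservation and obtain Theorem~\ref{thm:safety}, so your extra precision is well placed rather than a departure.
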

\begin{proof}
Proceed by induction over the typing derivation of $L; \emptyset \vdash e[y\leftarrow e'~d'] : \tau'$.
The cases for the ML subset are standard. For member access, we rely on Lemma~\ref{thm:tp-correctness}.
\end{proof}

%
%

\section{Practical experience}
\label{sec:impl}

The F\# Data library has been widely adopted by users and is one of the most downloaded
F\# libraries.\footnote{At the time of writing, the library has over 125,000 downloads on NuGet
(package repository), 1,844 commits and 44 contributors on GitHub.} A practical demonstration of
development using the library can be seen in an attached screencast and additional documentation can be
found at \url{http://fsharp.github.io/FSharp.Data}.

In this section, we discuss our experience with the safety guarantees provided by the
F\# Data type providers and other notable aspects of the implementation.

\subsection{Relative safety in practice}
\label{sec:safety-discuss}

The \emph{relative safety} property does not guarantee safety in the same way as traditional
closed-world type safety, but it reflects the reality of programming with external data that is
becoming increasingly important \cite{age-of-web}. Type providers increase the safety of this kind of
programming.

\paragraph{Representative samples.}
When choosing a representative sample document, the user does not need to provide a sample
that represents all possible inputs. They merely need to provide a sample that is representative
with respect to data they intend to access. This makes the task of choosing a representative
sample easier.

\paragraph{Schema change.}
Type providers are invoked at compile-time. If the schema changes (so that inputs are no longer
related to the shape of the sample used at compile-time), the program can fail at runtime and
developers have to handle the exception. The same problem happens when using weakly-typed code
with explicit failure cases.

F\# Data can help discover such errors earlier. Our first example (\S\ref{sec:introduction})
points the JSON type provider at a sample using a live URL. This has the advantage that a
re-compilation fails when the sample changes, which is an indication that the program needs to be
updated to reflect the change.

\paragraph{Richer data sources.}
In general, XML, CSV and JSON data sources without an explicit schema will necessarily require
techniques akin to those we have shown. However, some data sources provide an explicit schema with
versioning support. For those, a type provider that adapts automatically could be written,
but we leave this for future work.


\subsection{Parsing structured data}
\label{sec:impl-parsing}

In our formalization, we treat XML, JSON and CSV uniformly as \emph{data values}. With the addition of
names for records (for XML nodes), the definition of structural values is rich enough to capture all
three formats.\footnote{The same mechanism has later been used by the HTML type provider
(\url{http://fsharp.github.io/FSharp.Data/HtmlProvider.html}), which provides similarly easy
access to data in HTML tables and lists.} However, parsing real-world data poses a number of practical issues.

\paragraph{Reading CSV data.}
When reading CSV data, we read each row as an unnamed record and return a collection of rows.
One difference between JSON and CSV is that in CSV, the literals have no data types and so
we also need to infer the shape of primitive values. For example:
{\small{
\begin{verbatim}
  Ozone, Temp, Date,       Autofilled
  41,    67,   2012-05-01, 0
  36.3,  72,   2012-05-02, 1
  12.1,  74,   3 kveten,   0
  17.5,  #N/A, 2012-05-04, 0
\end{verbatim}
}}
\noindent
The value {\small\ttfamily \#N/A} is commonly used to represent missing values in CSV and is treated
as \kvd{null}. The \ident{Date} column uses mixed formats and is inferred as \ident{string}
(we support many date formats and ``May 3'' would be parsed as date). More interestingly,
we also infer \ident{Autofiled} as Boolean, because the sample contains only $0$ and $1$.
This is handled by adding a \ident{bit} shape which is preferred of both \ident{int} and \ident{bool}.

\paragraph{Reading XML documents.}
Mapping XML documents to structural values is more interesting. For each node, we
create a record. Attributes become record fields and the body becomes a field with a special
name. For example:
{\small{
\begin{verbatim}
  <root id="1">
    <item>Hello!</item>
  </root>
\end{verbatim}
}}
\noindent
This XML becomes a record \ident{root} with fields \ident{id} and $\bullet$ for the body.
The nested element contains only the $\bullet$ field with the inner text. As with CSV, we
infer shape of primitive values:
\begin{equation*}
\ident{root}~\{ \ident{id} \mapsto 1, \bullet \mapsto [ \ident{item}~\{ \bullet \mapsto \str{Hello!} \}] \}
\end{equation*}
The XML type provider also includes an option to use \emph{global inference}. In that case,
the inference from values (\S\ref{sec:formal-inferval}) unifies the shapes of \emph{all} records with the
same name. This is useful because, for example, in XHTML all {\small\ttfamily <table>} elements
will be treated as values of the same type.


\subsection{Providing idiomatic F\# types}
\label{sec:impl-naming}

In order to provide types that are easy to use and follow the F\# coding guidelines,
we perform a number of transformations on the provided types that simplify their structure
and use more idiomatic naming of fields. For example, the type provided for the XML document in
\S\ref{sec:impl-parsing} is:
\begin{equation*}
\begin{array}{l}
 \kvd{type}~\ident{Root}~=  \\
 \quad \kvd{member}~\ident{Id}~:~\ident{int} \\
 \quad \kvd{member}~\ident{Item}~:~\ident{string}
\end{array}
\end{equation*}
To obtain the type signature, we used the type provider as defined in Figure~\ref{fig:tp-generation}
and applied three additional transformations and simplifications:

\begin{itemize}
\item When a class $C$ contains a member $\bullet$, which is a class with further members, the
  nested members are lifted into the class $C$. For example, the above type \ident{Root}
  directly contains \ident{Item} rather than containing a member $\bullet$ returning a
  class with a member \ident{Item}.

\item Remaining members named $\bullet$ in the provided classes (typically of primitive
  types) are renamed to \ident{Value}.

\item Class members are renamed to follow \ident{PascalCase} naming convention, when a
  collision occurs, a number is appended to the end as in \ident{PascalCase2}. The provided
  implementation preforms the lookup using the original name.
\end{itemize}

Our current implementation also adds an additional member to each class that returns the
underlying JSON node (called \ident{JsonValue}) or XML element (called \ident{XElement}).
Those return the standard .NET or F\# representation of the value and can be used to dynamically
access data not exposed by the type providers, such as textual values inside mixed-content XML elements.


\subsection{Heterogeneous collections}
\label{sec:impl-hetero}

When introducing type providers (\S\ref{sec:providers-sum}), we mentioned how F\# Data
handles heterogeneous collections. This allows us to avoid inferring
labelled top shapes in many common scenarios. In the earlier example, a sample collection
contains a record (with \strf{pages} field) and a nested collection with values.

Rather than storing a single shape for the collection elements as in $[\sigma]$, heterogeneous
collections store multiple possible element shapes together with their \emph{  inferred multiplicity}
(exactly one, zero or one, zero or more):
\begin{equation*}
\begin{array}{rcl}
 \psi &\narrow{=}& 1? \lsep 1 \lsep \ast \\
 \sigma &\narrow{=}& ~\ldots \lsep [\sigma_1, \psi_1 | \ldots | \sigma_n, \psi_n ]
\end{array}
\end{equation*}
We omit the details, but finding a preferred common shape of two heterogeneous
collections is analogous to the handling of labelled top types. We merge cases with the same tag (by finding
their common shape) and calculate their new shared multiplicity (for example, by turning
$1$ and $1?$ into $1?$).


\subsection{Predictability and stability}
\label{sec:impl-stable}

As discussed in \S\ref{sec:providers}, our inference algorithm is designed to be predictable
and stable. When a user writes a program using the provided type and then adds another sample
(e.g.~with more missing values), they should not need to restructure their program.
For this reason, we keep the algorithm simple. For example, we do not use probabilistic methods to
assess the similarity of record types, because a small change in the sample could cause a large change
in the provided types.

We leave a general theory of stability and predictability of type providers to future work, but
we formalize a brief observation in this section. Say we write a program using a provided type
that is based on a collection of samples. When a new sample is added, the program can be modified
to run as before with only small local changes.

For the purpose of this section, assume that the Foo calculus also contains an \ident{exn}
value representing a runtime exception that propagates in the usual way,
i.e.~$C[\ident{exn}]\reduce\ident{exn}$, and also a conversion function \ident{int} that
turns floating-point number into an integer.

\begin{remark}[Stability of inference]
Assume we have a set of samples $d_1, \ldots, d_n$, a provided type based on the samples
$\tau_1, e_1, L_1 = \sem{\semalt{d_1, \ldots, d_n}}$ and some user code $e$ written using
the provided type, such that $L_1; x:\tau_1\vdash e : \tau$.

Next, we add a new sample $d_{n+1}$ and consider a new provided type
$\tau_2, e_2, L_2 = \sem{\semalt{d_1, \ldots, d_n, d_{n+1}}}$.

Now there exists $e'$ such that $L_2; x:\tau_2\vdash e' : \tau$ and if
for some $d$ it is the case that $e[x\leftarrow e_1~d] \reduce v$ then
also $e'[x\leftarrow e_2~d] \reduce v$.

Such $e'$ is obtained by transforming sub-expressions of $e$ using one of the following
translation rules:
\begin{enumerate}
\item
$C[e]$ to $C[\kvd{match}~e~\kvd{with}~\ident{Some}(v) \rightarrow v~|~\ident{None} \rightarrow \ident{exn}]$
\item
$C[e]$ to $C[e.\ident{M}]$ where $\ident{M} = \tytagof(\sigma)$ for some $\sigma$
\item
$C[e]$ to $C[\ident{int}(e)]$
\end{enumerate}
\end{remark}

\begin{proof}
For each case in the type provision (Figure~\ref{fig:tp-generation}) an original shape $\sigma$
may be replaced by a less preferred shape $\sigma'$. The user code can always be transformed
to use the newly provided shape:

\begin{itemize}
\item[--] Primitive shapes can become nullable (1), \ident{int} can become \ident{float} (3)
  or become a part of a labelled top type (2).
\item[--] Record shape fields can change shape (recursively) and record may become a part
  of a labelled top type (2).
\item[--] For list and nullable shapes, the shape of the value may change (we apply the
  transformations recursively).
\item[--] For the $\kvd{any}$ shape, the original code will continue to work (none of the labels is ever removed).
\end{itemize}
\vspace{-1.5em}
\end{proof}
\noindent
Intuitively, the first transformation is needed when the new sample makes a type optional.
This happens when it contains a \kvd{null} value or a record that does not contain a field
that all previous samples have. The second transformation is needed when a shape $\sigma$
becomes $\kvd{any}\langle\sigma, \ldots\rangle$ and the third one is needed when $\ident{int}$
becomes \ident{float}.

This property also underlines a common way of handling errors when using F\# Data type providers.
When a program fails on some input, the input can be added as another sample. This makes some
fields optional and the code can be updated accordingly, using a variation of (i) that uses
an appropriate default value rather than throwing an exception.

%
%

\section{Related and future work}
\label{sec:related}

The F\# Data library connects two lines of research that have been previously disconnected. The first is
extending the type systems of programming languages to accommodate external data sources and the second
is inferring types for real-world data sources.

The type provider mechanism has been introduced in F\# \cite{fsharp-typeprov,fsharp-typeprov-ddfp},
added to Idris  \cite{idris-tp} and used in areas such as semantic web \cite{liteq}. The F\# Data
library has been developed as part of the early F\# type provider research, but previous
publications focused on the general mechanisms. This paper is novel in that it shows the
programming language theory behind a concrete type providers.

\paragraph{Extending the type systems.}
Several systems integrate external data into a programming language. Those include
XML \cite{xduce,xduce-ml} and databases \cite{links}. In both of these, the system requires
the user to explicitly define the schema (using the host language) or it has an ad-hoc extension
that reads the schema (\emph{e.g.}~from a database). LINQ \cite{linq} is more general, but relies
on code generation when importing the schema.

The work that is the most similar to F\# Data is the data integration in C$\omega$ \cite{comega-xs}.
It extends C\# language with types similar to our structural types
(including nullable types, choices with subtyping and heterogeneous collections with multiplicities).
However, C$\omega$ does not infer the types from samples and extends the type system of the host
language (rather than using a general purpose embedding mechanism).

In contrast, F\# Data type providers do not require any F\# language extensions. The simplicity
of the Foo calculus shows we have avoided placing strong requirements on the host language. We
provide nominal types based on the shapes, rather than adding an advanced
system of structural types into the host language.

\paragraph{Advanced type systems and meta-programming.}
A number of other advanced type system features could be used to tackle the problem discussed
in this paper. The Ur \cite{ur} language has a rich system for working with records;
meta-programming \cite{template-hask,th-camlp4} and multi-stage programming \cite{multi-stage}
could be used to generate code for the provided types; and gradual typing \cite{gradual,gradual-js}
can add typing to existing dynamic languages. As far as we are aware, none of these
systems have been used to provide the same level of integration with XML, CSV and JSON.

\paragraph{Typing real-world data.}
Recent work \cite{typing-json} infers a succinct type of large JSON datasets using MapReduce.
It fuses similar types based on similarity. This is more sophisticated than our technique, but it
makes formal specification of safety (Theorem~\ref{thm:safety}) difficult. Extending our
\emph{relative safety} to \emph{probabilistic safety} is an interesting future direction.

The PADS project \cite{pads-dsl,pads-ml} tackles a more general problem of handling \emph{any} data format.
The schema definitions in PADS are similar to our shapes. The structure inference for LearnPADS
\cite{pads-learn} infers the data format from a flat input stream. A PADS type provider could follow
many of the patterns we explore in this paper, but formally specifying the safety property would be
more challenging.

\section{Conclusions}
\label{sec:conclusions}

We explored the F\# Data type providers for XML, CSV and JSON. As most real-world data does not come
with an explicit schema, the library uses \emph{shape inference} that deduces a shape from a set of
samples. Our inference algorithm is based on a preferred shape relation. It prefers records to
encompass the open world assumption and support developer tooling. The inference algorithm is predictable, which is
important as developers need to understand how changing the samples affects the resulting types.

We explored the theory behind type providers. F\# Data is a prime example of
type providers, but our work demonstrates a more general point. The types generated by type
providers can depend on external input and so we can only guarantee \emph{relative safety},
which says that a program is safe only if the actual inputs satisfy additional conditions.

Type providers have been described before, but this paper is novel in that it explores the
properties of type providers that represent the ``types from data'' approach. Our experience suggests
that this significantly broadens the applicability of statically typed languages to real-world
problems that are often solved by error-prone weakly-typed techniques.

\acks
We thank to the F\# Data contributors on GitHub and other colleagues working
on type providers, including Jomo Fisher, Keith Battocchi and Kenji Takeda. We are
grateful to anonymous reviewers of the paper for their valuable feedback and to
David Walker for shepherding of the paper.

\bibliographystyle{abbrvnat}
\bibliography{paper}

\begin{thebibliography}{26}
\providecommand{\natexlab}[1]{#1}
\providecommand{\url}[1]{\texttt{#1}}
\expandafter\ifx\csname urlstyle\endcsname\relax
  \providecommand{\doi}[1]{doi: #1}\else
  \providecommand{\doi}{doi: \begingroup \urlstyle{rm}\Url}\fi

\bibitem[Cardelli and Mitchell(1990)]{rows-cardelli}
L.~Cardelli and J.~C. Mitchell.
\newblock Operations on {R}ecords.
\newblock In \emph{Mathematical Foundations of Programming Semantics}, pages
  22--52. Springer, 1990.

\bibitem[Chlipala(2010)]{ur}
A.~Chlipala.
\newblock Ur: {S}tatically-typed {M}etaprogramming with {T}ype-level {R}ecord
  {C}omputation.
\newblock In \emph{ACM SIGPLAN Notices}, volume~45, pages 122--133. ACM, 2010.

\bibitem[Christiansen(2013)]{idris-tp}
D.~R. Christiansen.
\newblock {D}ependent {T}ype {P}roviders.
\newblock In \emph{Proceedings of Workshop on Generic Programming}, WGP '13,
  pages 25--34, 2013.
\newblock ISBN 978-1-4503-2389-5.

\bibitem[Colazzo et~al.(2012)Colazzo, Ghelli, and Sartiani]{typing-json}
D.~Colazzo, G.~Ghelli, and C.~Sartiani.
\newblock Typing {M}assive {JSON} {D}atasets.
\newblock In \emph{International Workshop on Cross-model Language Design and
  Implementation}, XLDI '12, 2012.

\bibitem[Cooper et~al.(2007)Cooper, Lindley, Wadler, and Yallop]{links}
E.~Cooper, S.~Lindley, P.~Wadler, and J.~Yallop.
\newblock Links: {W}eb {P}rogramming without {T}iers.
\newblock In \emph{Formal Methods for Components and Objects}, pages 266--296.
  Springer, 2007.

\bibitem[Donham and Pouillard(2010)]{th-camlp4}
J.~Donham and N.~Pouillard.
\newblock Camlp4 and {T}emplate {H}askell.
\newblock In \emph{Commercial Users of Functional Programming}, 2010.

\bibitem[Fisher and Gruber(2005)]{pads-dsl}
K.~Fisher and R.~Gruber.
\newblock {PADS}: {A} {D}omain-specific {L}anguage for {P}rocessing {A}d {H}oc
  {D}ata.
\newblock \emph{ACM SIGPLAN Notices}, 40\penalty0 (6):\penalty0 295--304, 2005.

\bibitem[Fisher et~al.(2008)Fisher, Walker, Zhu, and White]{pads-learn}
K.~Fisher, D.~Walker, K.~Q. Zhu, and P.~White.
\newblock {F}rom {D}irt to {S}hovels: {F}ully {A}utomatic {T}ool {G}eneration
  from {A}d {H}oc {D}ata.
\newblock In \emph{Proceedings of ACM Symposium on Principles of Programming
  Languages}, POPL '08, pages 421--434, 2008.
\newblock ISBN 978-1-59593-689-9.

\bibitem[Hosoya and Pierce(2003)]{xduce}
H.~Hosoya and B.~C. Pierce.
\newblock {XDuce}: {A} {S}tatically {T}yped {XML} {P}rocessing {L}anguage.
\newblock \emph{Transactions on Internet Technology}, 3\penalty0 (2):\penalty0
  117--148, 2003.

\bibitem[Igarashi et~al.(1999)Igarashi, Pierce, and Wadler]{fwjava}
A.~Igarashi, B.~Pierce, and P.~Wadler.
\newblock Featherweight {Java}: {A} {M}inimal {C}ore {C}alculus for {J}ava and
  {GJ}.
\newblock In \emph{ACM SIGPLAN Notices}, volume~34, pages 132--146. ACM, 1999.

\bibitem[Mandelbaum et~al.(2007)Mandelbaum, Fisher, Walker, Fernandez, and
  Gleyzer]{pads-ml}
Y.~Mandelbaum, K.~Fisher, D.~Walker, M.~Fernandez, and A.~Gleyzer.
\newblock {PADS/ML}: A {F}unctional {D}ata {D}escription {L}anguage.
\newblock In \emph{ACM SIGPLAN Notices}, volume~42, pages 77--83. ACM, 2007.

\bibitem[Mehnert and Christiansen(2014)]{idris-tools}
H.~Mehnert and D.~Christiansen.
\newblock {T}ool {D}emonstration: {A}n {IDE} for {P}rogramming and {P}roving in
  {I}dris.
\newblock In \emph{Proceedings of Vienna Summer of Logic}, VSL'14, 2014.

\bibitem[Meijer et~al.(2003)Meijer, Schulte, and Bierman]{comega-xs}
E.~Meijer, W.~Schulte, and G.~Bierman.
\newblock Unifying {T}ables, {O}bjects, and {D}ocuments.
\newblock In \emph{Workshop on Declarative Programming in the Context of
  Object-Oriented Languages}, pages 145--166, 2003.

\bibitem[Meijer et~al.(2006)Meijer, Beckman, and Bierman]{linq}
E.~Meijer, B.~Beckman, and G.~Bierman.
\newblock {LINQ}: {R}econciling {O}bject, {R}elations and {XML} in the {.NET}
  {F}ramework.
\newblock In \emph{Proceedings of the International Conference on Management of
  Data}, SIGMOD '06, pages 706--706, 2006.

\bibitem[Milner(1997)]{sml}
R.~Milner.
\newblock \emph{The {D}efinition of {S}tandard {ML}: {R}evised}.
\newblock MIT press, 1997.

\bibitem[Petricek and Syme(2015)]{age-of-web}
T.~Petricek and D.~Syme.
\newblock In the {A}ge of {W}eb: {T}yped {F}unctional-first {P}rogramming
  {R}evisited.
\newblock \emph{Post-Proceedings of ML Workshop}, 2015.

\bibitem[R{\'e}my(1993)]{rows-remy}
D.~R{\'e}my.
\newblock \emph{Type {I}nference for {R}ecords in a {N}atural {E}xtension of
  {ML}}.
\newblock Theoretical Aspects Of Object-Oriented Programming. Types, Semantics
  and Language Design. MIT Press, 1993.

\bibitem[Scheglmann et~al.(2014)Scheglmann, L{\"a}mmel, Leinberger, Staab,
  Thimm, and Viegas]{liteq}
S.~Scheglmann, R.~L{\"a}mmel, M.~Leinberger, S.~Staab, M.~Thimm, and E.~Viegas.
\newblock {IDE} {I}ntegrated {RDF} {E}xploration, {A}ccess and {RDF--B}ased
  {C}ode {T}yping with {LITEQ}.
\newblock In \emph{The Semantic Web: ESWC 2014 Satellite Events}, pages
  505--510. Springer, 2014.

\bibitem[Sheard and Jones(2002)]{template-hask}
T.~Sheard and S.~P. Jones.
\newblock Template {M}eta-programming for {H}askell.
\newblock In \emph{Proceedings of the ACM Workshop on Haskell}, pages 1--16.
  ACM, 2002.

\bibitem[Siek and Taha(2006)]{gradual}
J.~G. Siek and W.~Taha.
\newblock Gradual {T}yping for {F}unctional {L}anguages.
\newblock In \emph{Scheme and Functional Programming Workshop}, pages 81--92,
  2006.

\bibitem[Sulzmann and Lu(2006)]{xduce-ml}
M.~Sulzmann and K.~Z.~M. Lu.
\newblock A {T}ype-safe {E}mbedding of {XDuce} into {ML}.
\newblock \emph{Electr. Notes in Theoretical Comp. Sci.}, 148\penalty0
  (2):\penalty0 239--264, 2006.

\bibitem[Swamy et~al.(2014)Swamy, Fournet, Rastogi, Bhargavan, Chen, Strub, and
  Bierman]{gradual-js}
N.~Swamy, C.~Fournet, A.~Rastogi, K.~Bhargavan, J.~Chen, P.-Y. Strub, and
  G.~Bierman.
\newblock Gradual {T}yping {E}mbedded {S}ecurely in {JavaScript}.
\newblock In \emph{ACM SIGPLAN Notices}, volume~49, pages 425--437. ACM, 2014.

\bibitem[Syme et~al.(2012)Syme, Battocchi, Takeda, Malayeri, Fisher, Hu, Liu,
  McNamara, Quirk, Taveggia, Chae, Matsveyeu, and Petricek]{fsharp-typeprov}
D.~Syme, K.~Battocchi, K.~Takeda, D.~Malayeri, J.~Fisher, J.~Hu, T.~Liu,
  B.~McNamara, D.~Quirk, M.~Taveggia, W.~Chae, U.~Matsveyeu, and T.~Petricek.
\newblock Strongly-typed {L}anguage {S}upport for {I}nternet-scale
  {I}nformation {S}ources.
\newblock Technical Report MSR-TR-2012-101, Microsoft Research, September 2012.

\bibitem[Syme et~al.(2013)Syme, Battocchi, Takeda, Malayeri, and
  Petricek]{fsharp-typeprov-ddfp}
D.~Syme, K.~Battocchi, K.~Takeda, D.~Malayeri, and T.~Petricek.
\newblock {T}hemes in {I}nformation-rich {F}unctional {P}rogramming for
  {I}nternet-scale {D}ata {S}ources.
\newblock In \emph{Proceedings of the Workshop on Data Driven Functional
  Programming}, DDFP'13, pages 1--4, 2013.

\bibitem[Taha and Sheard(1997)]{multi-stage}
W.~Taha and T.~Sheard.
\newblock Multi-stage {P}rogramming with {E}xplicit {A}nnotations.
\newblock \emph{ACM SIGPLAN Notices}, 32\penalty0 (12):\penalty0 203--217,
  1997.
\newblock ISSN 0362-1340.

\bibitem[Wright and Felleisen(1994)]{syntactic}
A.~K. Wright and M.~Felleisen.
\newblock A {S}yntactic {A}pproach to {T}ype {S}oundness.
\newblock \emph{Information and computation}, 115\penalty0 (1):\penalty0
  38--94, 1994.

\end{thebibliography}

\appendix

\section{OpenWeatherMap service response}
\label{sec:appendix-weather}

The introduction uses the \ident{JsonProvider} to access weather
information using the OpenWeatherMap service. After registering, you can access the service
using a URL \url{http://api.openweathermap.org/data/2.5/weather} with query string parameters
\strf{q} and \strf{APPID} representing the city name and application key. A sample response looks
as follows:

\vspace{-1em}
{\small\begin{verbatim}
{
  "coord": {
    "lon": 14.42,
    "lat": 50.09
  },
  "weather": [
    {
      "id": 802,
      "main": "Clouds",
      "description": "scattered clouds",
      "icon": "03d"
    }
  ],
  "base": "cmc stations",
  "main": {
    "temp": 5,
    "pressure": 1010,
    "humidity": 100,
    "temp_min": 5,
    "temp_max": 5
  },
  "wind": { "speed": 1.5, "deg": 150 },
  "clouds": { "all": 32 },
  "dt": 1460700000,
  "sys": {
    "type": 1,
    "id": 5889,
    "message": 0.0033,
    "country": "CZ",
    "sunrise": 1460693287,
    "sunset": 1460743037
  },
  "id": 3067696,
  "name": "Prague",
  "cod": 200
}
\end{verbatim}}

\end{document}